\theoremstyle{plain}
\newtheorem{theorem}{Theorem}[section]
\newtheorem{lemma}[theorem]{Lemma}
\newtheorem{prop}[theorem]{Proposition}
\newtheorem{cor}[theorem]{Corollary}
\theoremstyle{remark}
\newtheorem{defn}{Definition}
\newtheorem{rmk}{Remark} 
\newcommand{\LSA}{{\operatorname{LSA}}}
\newcommand{\cf}{\mathrm{CF}}
\newcommand{\popsize}{\eta} 
\newcommand{\Pf}{{\mathrm{I}\!\mathrm{P}}}
\renewcommand{\P}[1]{\Pf\!\left\{#1\right\}}
\DeclarePairedDelimiter\abs{\lvert}{\rvert}
\let\oldabs\abs
\def\abs{\@ifstar{\oldabs}{\oldabs*}}
\definecolor{olive}{rgb}{0.42, 0.56, 0.14}
\definecolor{steelblue}{rgb}{0.27, 0.51, 0.71}
\definecolor{changecolor}{RGB}{192,64,0}
\newcommand{\com}[1]{} 
\begin{document}

\begin{frontmatter}
\title{Beyond level-1: Identifiability of a class of galled tree-child networks}

\author[1]{Elizabeth S. Allman}\ead{esallman@alaska.edu}

\author[2]{C\'ecile An\'e}\ead{cecile.ane@wisc.edu}

\author[3]{Hector Ba\~{n}os}\ead{hector.banos@csusb.edu}

\author[1]{John A. Rhodes\corref{cor1}}\ead{jarhodes2@alaska.edu}

\cortext[cor1]{Corresponding author. All authors contributed equally and are listed alphabetically.}
\affiliation[1]{
organization={Department of Mathematics and Statistics, University of Alaska - Fairbanks},
postcode={99775-6660}, state={AK}, country={USA}}
\affiliation[2]{
 organization={Department of Statistics, University of Wisconsin - Madison},
postcode={53706}, state={WI}, country={USA}}
\affiliation[3]{
organization={Department of Mathematics, California State University, San Bernardino},
postcode={92407}, state={CA}, country={USA}}

\begin{abstract}
Inference of phylogenetic networks is of increasing interest in the genomic era.
However, the extent to which phylogenetic networks are identifiable from
various types of data remains poorly understood, despite its crucial role in  justifying methods.
This work obtains strong identifiability results for large sub-classes of
galled tree-child semidirected networks.
Some of the conditions our proofs require, 
such as the identifiability of a network's tree of blobs or the
circular order of 4 taxa around a cycle in a level-1 network,
are already known to hold for many data types.
We show that all these conditions hold for quartet concordance factor data
under various gene tree models, yielding the strongest results
from 2 or more samples per taxon.
Although the network classes we consider have topological restrictions, 
they include non-planar networks of any level and are substantially more general than level-1 networks
--- the only class previously known to
enjoy identifiability from many data types.
Our work establishes a route for proving future identifiability results for tree-child
galled networks from data types other than quartet concordance factors, by checking that explicit
conditions are met.
\end{abstract}

\begin{keyword}
semidirected network \sep admixture graph \sep concordance factor 
\sep coalescent \sep hybridization \sep gene flow
%
\end{keyword}
\end{frontmatter}

\section{Introduction}

The analysis of genomic data sets in recent years has led to the discovery
of numerous instances of hybrid speciation or gene flow,
with phylogenetic networks and admixture graphs
increasingly used to describe evolutionary relationships
\cite[e.g.,][]{Linder2004,Mallet2005,Noor2006,DeRaad22,Lopes23,Yang23,2023Nielsen-admixturebayes,maier2023limits,Ciezarek24}.
Several inference methods used in these analyses
have focused on the simplest class of networks, those of level 1,
in which reticulations are sufficiently isolated from one another that the network
shows only disjoint cycles joined by tree-like edges.
While computational difficulties have been partially responsible for this focus,
a lack of theoretical understanding of the extent to which more
complex network classes are identifiable is also a barrier.

Network identifiability, which is the property that sufficiently large
data sets produced in accord with a model allows for the network's
recovery in principle, is essential for valid statistical inference.
For level-1 networks, identifiability has been proved
(sometimes requiring mild restrictions) under various models and data types
\cite{SolisLemus2016,GrossLong18,Banos2019,Allman2019,GrossEtAl2021,ABR2022logdetNet,2023XuAne_identifiability,2024ABGarrotelopesR}.

In this work, we extend our understanding of network identifiability substantially,
to a large class of networks, those whose blobs are galled and tree-child.
Informally, galled networks are those for which every reticulation lies in a
cycle with no others, and a tree-child network is one in which every node has at least one
child node that is not a reticulation.
Despite the simple structure of these networks, they can be of arbitrary level, and need not be planar. 
The theoretical results here suggest that these networks should be a good ``next step" class of 
networks on which developers of inference methods might focus.

In establishing our results, we consider quartet concordance factors (CFs)
as input data. These are the proportions of gene trees displaying
the various unrooted 4-taxon tree relationships for each subset of 4 taxa.
Such CFs have formed the basis of a number of inference methods under the
multispecies coalescent model, notably
ASTRAL \cite{Zhang2018} for the inference of species trees, and
SNaQ \cite{SolisLemus2016} and NANUQ \cite{ABR2019,ABRW2025} 
for the inference of level-1 networks.
Quartet information also underlies PhyNEST \cite{Kong2024}, although
with site pattern frequencies as input.
Quartet CFs are attractive for inference for several reasons, including providing a computational speedup over methods that use full gene trees.
In addition, since quartet CFs capture only topological gene tree information,
they offer robustness to variability of
substitution rates across genes and lineages, to departures from
a molecular clock, and to edge length estimation error in gene trees.

\begin{figure}
\centering
\includegraphics[scale=.9]{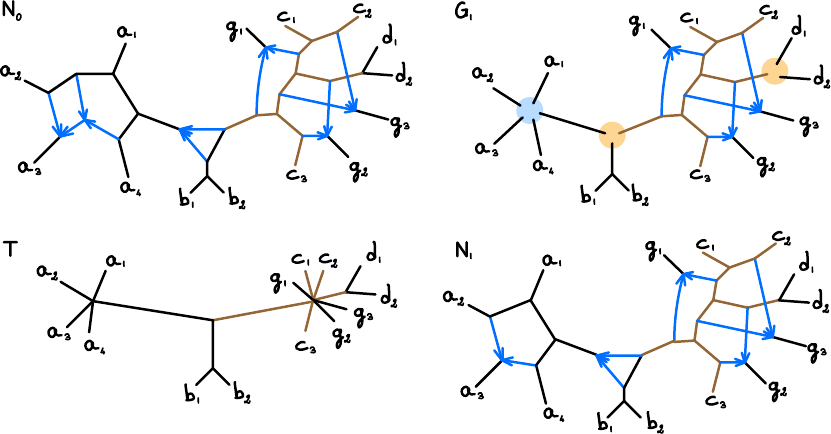}
\caption{Example of a network $N_0$ (top, left) and some of its
features identifiable from quartet concordance factors.
$T$ (bottom, left) is $N_0$'s tree of blobs.  Since the largest (rightmost) 
blob of $N_0$ is $\frak C_5$ (see \Cref{def:classCk}),
that blob's full topology is identifiable using one sample per taxon.
This blob appears in $G_1$ (top, right), which shows the features of $N_0$
proved to be identifiable in this article, 
with large circles representing unresolved blobs of uncertain topology.
As discussed in the main text, 
the central 3-blob can be detected as non-trivial and its hybrid node can sometimes be identified.
In contrast, the left-most blob is not identifiable: $N_1$ (bottom, right)
cannot be distinguished from $N_0$ using quartet concordance factors.
}\label{fig:intro}
\end{figure}
 
\Cref{fig:intro} provides an informal example of our results
which, together with previously established identifiability theorems
\cite{ABMR2023, 2024ABGarrotelopesR,2025Rhodes-circular},
illustrate that many features of complicated networks are in fact identifiable from CFs.
Under commonly-used models of gene tree generation
on the species network $N_0$, with one sample per taxon, the tree of blobs $T$,
in which each blob (defined in  \Cref{sec:skeletons-blobs})  
is contracted to a node is first identified. We then analyze each blob individually and are able to identify the full structure
of the large (rightmost) blob of $N_0$, since it is galled, tree-child, and has no small cycles.
The other blobs of $N_0$, left unresolved and depicted as colored circles
in $G_1$, are either not fully identifiable or require additional assumptions to be so.
For the two 3-blobs in $T$, shown with orange circles in $G_1$,
by taking two samples per taxon, we can identify that the rightmost 3-blob is trivial
and the central 3-blob is not.
If  a non-trivial 3-blob is assumed to be level-1, as in this case,
 its hybrid node is identifiable only for certain edge lengths.  
For the leftmost level-2 blob of $N_0$, since it is outer-labelled planar,
the circular order of taxa  around it can be identified as well as that
taxon $a_3$ descends from a hybrid node. 
However, that blob's internal structure cannot be identified with a single sample per taxon.
Indeed, edge parameters on $N_1$ can be set
(as in \cite[Section 7.2]{2025Rhodes-circular}) such that $N_1$ is
undistinguishable from $N_0$, both generating the same quartet concordance factors.

\smallskip
  
After introducing terminology and requisite background results
(\Cref{sec:networks}) and introducing our new class of networks (\Cref{sec:galledTreeChild}),
our arguments are structured into two parts.
Following the statements of a few general assumptions on a network class,
we give combinatorial arguments for network identifiability under these
(or subsets of these) assumptions in \Cref{sec:id}.
Then, in \Cref{sec:moddat}, we show that for quartet CF data 
these basic assumptions hold under various models,
leading to our main practical result in \Cref{thm:main}. 
A discussion in \Cref{sec:disc} concludes this work.

\section{Phylogenetic networks and blobs}\label{sec:networks}

We use standard  terminology for phylogenetic networks, as in \cite{Steel2016,SolisLemus2016,Banos2019,2024Ane-anomalies}, 
recalling it briefly in the next two subsections. We then define more specialized terms central to the work here, and establish some 
basic properties in the remaining subsections.

\subsection{Rooted networks}
A \emph{rooted topological phylogenetic network} $N^+$ on a set of
taxa $X$ is a finite connected rooted directed acyclic graph 
with vertices comprising a \emph{root},
\emph{hybrid nodes}, \emph{internal tree nodes} and \emph{leaves},
and edges which are either \emph{hybrid} or \emph{tree} edges.
The root has in-degree 0. Leaves are of in-degree 1 and out-degree 0 
and bijectively labeled by elements of $X$.
Hybrid nodes have in-degree at least 2 and out-degree at least 1.
The internal tree nodes make up the remaining nodes.
By \emph{tree nodes}, we mean all non-hybrid nodes.
If a non-root node in $N^+$ has degree 3, then this node is \emph{binary}.
A hybrid node is \emph{bicombining} if its in-degree is 2.
An edge is \emph{hybrid} or \emph{tree} in accord with its child node.
Any hybrid edge shares its child node with at least one other \emph{partner} hybrid edge.
An edge incident to a leaf is called \emph{pendant}.

A network is \emph{binary} if its root (if it has one) has
degree 2 and all other non-leaf nodes are binary. 
A network $N^+$ is  \emph{metric} if  
each edge $e$ is assigned a pair of parameters
$(\ell(e),\gamma(e))$, where $\ell(e)\ge 0$ is an \emph{edge length}
with $\ell(e)> 0$ for tree edges, and
$\gamma(e)\in(0,1]$ is a \emph{hybridization} or \emph{inheritance parameter},
with the sum over partner edges equal to 1.
Note that $\gamma(e)=1$ if $e$ is a tree edge,
and we  require that $\gamma(e) < 1$ if $e$ is a hybrid edge.

We say a node or edge $s$ is \emph{above} or \emph{ancestral}  to another node or edge $s'$ in $N^+$
(and $s'$ is \emph{below} $s$ or a \emph{descendant} of $s$)
if there is a (possibly empty) directed path from $s$ to $s'$.
An \emph{up-down path} or \emph{trek} in $N^+$ is an undirected
path joining nodes $(u_1,u_2,\dots,u_n)$
such that for some $i\in[n]$, $u_i\dots u_2u_1$ and $u_i\dots u_{n-1}u_n$
are directed paths in $N^+$.

Let $N^+$ be a network on $X$ and let $Y\subseteq X$.  
The \emph{least stable ancestor} of $Y$ on $N^+$, denoted $\LSA(Y)$,
is the lowest node through which all directed paths from the root
to any taxon in $Y$ must pass. The \textit{LSA network of $N^+$},
is the network obtained from $N^+$ by deleting all edges and nodes
strictly ancestral to $v=\LSA(X)$, and rerooting at $v$.
We say  $N^+$ is an \emph{LSA network} if $r=\LSA(X)$. 

Throughout this work we assume all rooted networks are LSA networks.

\subsection{Semidirected networks}
\label{sec:semidirectednetworks}

The \emph{semidirected} phylogenetic network $N^-$ of a (LSA) network $N^+$
is the graph obtained by undirecting all tree edges and suppressing its root if it has degree 2.
The network  $N^+$ is an example of a \emph{rooted partner of $N^-$} \cite{2023LinzWicke}.
A semidirected phylogenetic network may have more than one
rooted partner, since many rootings of $N^-$ might be consistent
with hybrid edge directions in $N^-$.
 
An \emph{up-down path} or \emph{trek} in $N^-$ is a path in the fully undirected
graph without any pair of consecutive hybrid edges $p_1 h, h p_2$ directed as $(p_1,h)$, $(p_2,h)$.
As shown in \cite{2023XuAne_identifiability}, up-down paths in $N^+$ are in 
bijection with up-down paths in $N^-$.
A \emph{semidirected path} in $N^-$ is a path
joining nodes $(u_1,u_2,\dots,u_n)$ such that each edge
is either undirected, or directed as $(u_i,u_{i+1})$.
An \emph{up-down cycle} or  \emph{trek cycle} is 
a non-empty up-down path starting and ending at the same
node, which is necessarily hybrid.

Removing all  hybrid edges from $N^-$ results in connected components (called skeleton trees below),
with exactly one of these, the  \emph{root component},
containing possible root locations \cite{2024MaxfieldXuAne}.
The set of edges where $N^-$ might be rooted consists of 
those edges in the root component and the hybrid edges
incident to the root component.
Note that, by definition, any edge not in the root component
has the same direction in all rooted partners $N^+$.

We say that a node or edge $s$ is  \emph{above} or \emph{ancestral to} another node or edge $s'$ in $N^-$
(and $s'$ is \emph{below} or a \emph{descendant of} $s$)
if $s$ is above $s'$ in every rooted partner of $N^-$.
In particular, tree edges in non-root components of $N^-$ are all below
at least one hybrid node.

For  any rooted or semidirected graph  $G$, 
the \emph{reduced graph} of $G$
is obtained from $G$ by suppressing all degree-2 nodes (other than the root). In this work the reduced version of a graph $G$ may be denoted by
$\overline G$ or simply specified in words.
For a rooted or semidirected network $N$, the \emph{induced network} $N_Y$ of a network $N$
is the network obtained from $N$ by retaining only the up-down paths between all pairs of taxa in $Y$.  
In \cite{Banos2019} it was shown that the operations of inducing and semidirecting
commute, that is, as reduced graphs ${(N^+_Y)}^-=(N^-)_Y$.

\medskip

For the remainder of this work, we focus on semidirected phylogenetic networks $N^-$, 
denoted for simplicity by $N$. 

\subsection{Skeletons and Blobs}
\label{sec:skeletons-blobs}

Our arguments will require a number of special subgraphs of phylogenetic networks that we now define.
	
\begin{defn}
The \emph{unreduced skeleton forest} of  $N$ is the graph obtained by removing
all hybrid edges.
The \emph{unreduced skeleton trees} of $N$ are the connected components
of its unreduced skeleton forest. The \emph{unreduced root skeleton tree} is the root component.
The \emph{skeleton forest, skeleton trees} and \emph{root skeleton tree} of $N$
are the corresponding reduced graphs.
Finally, a skeleton tree is \emph{trivial} if it only contains a single node,
or a single edge. 
\end{defn}

Note that, by definition, each edge in a skeleton tree $T$ is composed of
one or more tree edges in $N$.

Skeleton trees of a phylogenetic network are not necessarily phylogenetic trees,
as they may have unlabeled leaves. See, for example, the root skeleton tree of the network $N'$ in \Cref{fig:example01}.
Related concepts like \emph{tree-node forest} and  \emph{tree-node components}
were introduced in \cite{2017Gunawan}, but differ from the definition here
in that hybrid \emph{nodes} are removed, and hence the child edges of those nodes are also deleted.

\begin{lemma}\label{lem:skeletonnum}
If $N$ is a network with $k$ hybrid nodes, then the skeleton forest of $N$ has $k+1$ skeleton trees.
\end{lemma}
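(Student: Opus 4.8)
The plan is to count, in the \emph{unreduced} skeleton forest, since reducing (suppressing degree-2 nodes) does not change the number of connected components. So it suffices to show that removing all hybrid edges from $N$ produces exactly $k+1$ connected components.

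First I would set up the count via edges removed. Each hybrid node $h$ has in-degree at least $2$; let $d_i(h)$ denote its in-degree. The hybrid edges of $N$ are precisely the edges whose child is a hybrid node, so the total number of hybrid edges is $\sum_{h} d_i(h)$, where the sum is over the $k$ hybrid nodes. Removing an edge from a connected graph either disconnects it into one more component or leaves the number of components unchanged (if the edge lies on a cycle). So if I remove the hybrid edges one at a time, the number of components can increase by at most the number of edges removed; I need to show it increases by exactly $k$, i.e.\ that the ``right'' number of these removals are bridges after the earlier ones are gone.

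The cleanest way is to process the hybrid edges of each hybrid node $h$ together and argue: of the $d_i(h)$ hybrid edges into $h$, removing all but one of them does not disconnect anything (the child node $h$ and everything below it stays attached through the remaining hybrid edge into $h$), while removing the last one detaches $h$ (together with its subtree in the skeleton forest) as exactly one new component — because $N$ is acyclic and rooted (an LSA network), every node other than the root is reachable from the root, and after deleting all hybrid edges the only way into the subtree rooted at $h$ within the skeleton forest is via a tree edge from $h$'s parent, which we are not deleting, OR via the hybrid edges into $h$, all of which we are deleting. Here I should be slightly careful: a descendant of $h$ might also be a descendant of the root through a path avoiding $h$, but any such path must use some hybrid edge, hence gets cut; this is where acyclicity of $N^+$ is used. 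So the net effect of handling $h$ is: $d_i(h) - 1$ non-bridge removals plus exactly one bridge removal, contributing $+1$ to the component count. Summing over the $k$ hybrid nodes gives a total increase of $k$, and since the original graph $N$ is connected (one component), the skeleton forest has $1 + k$ components. Reducing the graph leaves this unchanged, giving $k+1$ skeleton trees.

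The main obstacle is making the ``exactly one bridge per hybrid node'' claim rigorous without circular reasoning, i.e.\ correctly pinning down which edge is the bridge and why the remaining hybrid edges into $h$ are not bridges after the others are removed. I expect to handle this by working in a fixed rooted partner $N^+$ and using a root-first (topological) ordering: process hybrid nodes in an order consistent with ancestry, so that when I delete the hybrid edges into $h$, every node below $h$ that is not reachable from the root avoiding $h$ has already been ``released,'' and the subtree of the skeleton forest currently hanging below $h$ is attached to the rest only through $h$ itself. An alternative, possibly slicker, route is a direct Euler-characteristic / rank argument: the number of components of the unreduced skeleton forest equals $|V| - |E_{\text{tree}}|$ since it is a forest (each skeleton tree is acyclic), and one shows $|E_{\text{tree}}| = |V| - k - 1$ by a degree/handshake count using that every hybrid node has in-degree contributing only hybrid edges — but verifying the forest is acyclic already requires essentially the same care, so I would keep the incremental-removal argument as the primary plan and mention the alternative only if it shortens the write-up.
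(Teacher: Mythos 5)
Your overall strategy --- delete the hybrid edges and argue the component count rises by exactly one per hybrid node --- is the same as the paper's, but the order in which you process the hybrid nodes matters, and the root-first order you propose breaks the per-node accounting. The lemma is stated for general networks, where a hybrid node may itself be the parent of a hybrid edge (tree-childness is not assumed here). Concretely, let $p$ be a hybrid node with parents $x_1,x_2$ whose only out-edge is the hybrid edge $(p,h)$ into a lower hybrid node $h$ with second parent $q$, a tree node in the root component. Processing $p$ first, removing $(x_1,p)$ and then $(x_2,p)$ disconnects nothing, since $p$ stays attached to the root component through the path $p$--$h$--$q$: so $p$ contributes $0$ new components, not $1$. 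Processing $h$ next, removing $(p,h)$ isolates $p$ ($+1$) and removing $(q,h)$ detaches $h$ and its subtree ($+1$), so $h$ contributes $2$. The total happens to be right, but your claim of ``exactly one bridge per hybrid node'' is false under this ordering, and your stated justification (``every node below $h$ \dots has already been released'') only makes sense for a bottom-up order anyway. The paper avoids this precisely by always choosing a \emph{lowest} remaining hybrid node $v$: nothing below $v$ is then hybrid, so the detached piece is a tree attached to the rest only through $v$'s in-edges, and the $+1$-per-node count is valid. Your argument would go through if you switched to that order.

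Separately, the alternative you mention and dismiss is in fact the cleanest route and requires none of the care you worry about: in the unreduced skeleton forest, every node other than the root and the $k$ hybrid nodes retains exactly one in-edge (its unique parent edge, a tree edge), while the root and the hybrid nodes retain none; since the underlying digraph is acyclic and every in-degree is at most $1$, the skeleton forest really is a forest, with $|V|-k-1$ edges and hence $k+1$ components.
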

\begin{proof} 
 Letting $N^+$ be  a rooted partner of $N$ with root $\rho$, pick a lowest hybrid node $v$ in $N^+$
 (and $N$).
 By deleting all $v$'s partner hybrid edges we obtain a graph with 2 connected components
 since all nodes below $v$ remain connected to $v$, and all other nodes are connected to $\rho$ by
 some path.   The first component has no hybrid nodes and is thus a tree, 
 while the component containing $\rho$ has $k-1$ hybrid nodes.   Repeating this process
 on the component containing $\rho$ until no hybrid nodes remain,
 gives $k+1$ components which are the unreduced skeleton trees.
 \end{proof}
 
\medskip

Let $G$ be an arbitrary graph. 
A \emph{blob} of $G$ is a maximal connected subgraph with no cut edges
(that is, a 2-edge-connected component).
A \emph{trivial blob} is one consisting of a single node.
Note that a non-trivial blob is a biconnected component (or block)
if the network is binary, but otherwise may contain one or more blocks
\cite{2023XuAne_identifiability}. 

A node in a blob is a \emph{boundary node} if
it is incident to one or more cut edges.
A blob incident to exactly $m$ cut edges is an \emph{$m$-blob}.
If a network is binary, a non-trivial $m$-blob has exactly $m$
boundary nodes.
In general, an $m$-blob has $m$ or fewer boundary nodes,
but any network with an $m$-blob must have at least $m$ taxa.
For examples, see \Cref{fig:example01}.

\begin{figure}[h]
\centering\includegraphics[scale=1]{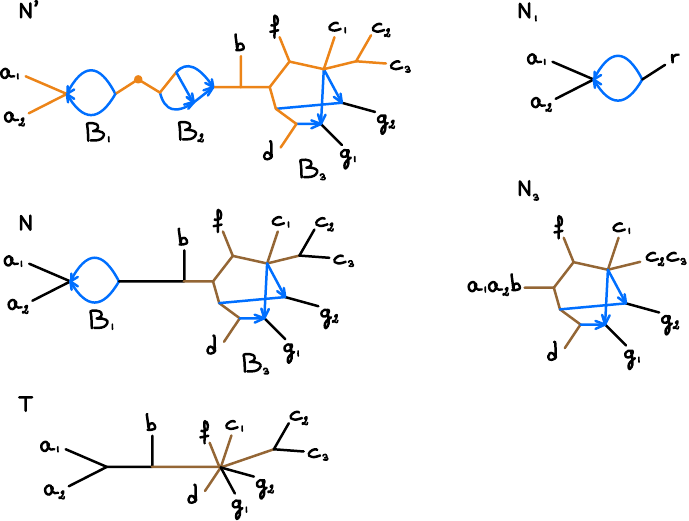}
\caption{ 
The semidirected network $N'$ (top left) is unreduced with a degree-2 node marked with a dot.
Its unreduced skeleton forest, obtained by removing hybrid edges (blue),
has 3 non-trivial unreduced skeleton trees (orange).
$N'$ has 3 non-trivial blobs: $B_1$, $B_2$, and $B_3$.
The blob $B_1$ is a level-1 3-blob with 2 boundary nodes, and
$B_3$ is a 7-blob with 6 boundary nodes.
$N'$ is a 2-blob extension of $N$ (center left).
$T$ (bottom left) is the tree of blobs for both $N$ and $N'$.
The subnetwork of $N'$ induced by $\{a_1, f, c_1, c_2, g_2, g_1, d\}$
is an extended bloblet with $B_3$ its internal blob.
\endgraf
\hspace*{.1in}
In the language of \Cref{sec:galledTreeChild}, $N_1$ and $N_3$ (right) are
bloblets generated by $B_1$ and $B_3$ respectively.
Both are non-binary, but in $N_3$ hybrid nodes are binary.
$N_1$ is a galled, weakly (but not strongly) tree-child bloblet.
$N_3$ is $\mathfrak{C}_k$ for $k=3,4,5$, with its unreduced root skeleton tree
shown in brown.}
\label{fig:example01}
\end{figure}

For a network $N$ the \emph{unreduced tree of blobs} is
the tree obtained from $N$ by contracting each of its blobs to a vertex.
Since blobs are 2-edge connected components, this contraction
is well-defined, and distinct blobs are contracted to distinct vertices
in the unreduced tree of blobs. 
The \emph{tree of blobs} $T(N)$ of $N$ is obtained by reducing its unreduced tree of blobs.
See \Cref{fig:example01} for an example.

\begin{defn}\label{def:extension}
A semidirected phylogenetic network $N'$ is a \emph{2-blob extension}
of a network $N$ if $N$ is obtained from $N'$ by contracting
some of its 2-blobs and suppressing the resulting degree-2 nodes.
\end{defn}

\noindent
If $N'$ is a 2-blob extension of $N$, then $N'$ has the same leaf set and the same
tree of blobs $T$ as $N$,
and any node in $T$ corresponds to the same blob in $N$ and $N'$
(see \Cref{fig:example01}).
\begin{defn}
A semidirected network $N$ is a \emph{bloblet} if,
in addition to the trivial blobs of its leaves, it has
a single additional blob that is an $m$-blob with $m\geq 2$.
A network $N'$ is an \emph{extended bloblet} if it is a 2-blob extension of a 
bloblet $N$ with $m\ge 3$. The \emph{internal blob} of an extended bloblet
is its unique $m$-blob with $m\geq 3$.
\end{defn}

\Cref{fig:example01,fig:example02} show examples of bloblets.
The name ``bloblet" was suggested by the term \emph{sunlet}, 
which has been used in many works for graphs with a single blob that is a cycle
in a binary network (a binary level-1 bloblet).  Note that if $N$ is a bloblet  on $n$ taxa, then its internal blob is an $n$-blob
with $k \le n$ boundary nodes, and its  tree of blobs  is a star tree. 
The same holds when $N'$ is an extended bloblet on $n \ge 3$ taxa: it has an internal $n$-blob
and star tree of blobs.
Finally, if a bloblet $N$'s internal blob is trivial, then $N$ itself is a star tree.

\begin{defn}
A node $v$ in a blob $B$ of a semidirected $N$ is a \emph{lowest node} if,
in every rooted partner $N^+$ of $N$,
$v$ has no proper descendants in $B$.
\end{defn}

Note that every non-trivial blob has at least one lowest node, and 
that all lowest nodes are necessarily hybrid. 
This definition extends that of \cite{2025Rhodes-circular} for
rooted networks, where it was proved that any node
without descendants in a rooted non-trivial blob must be hybrid.
If $v$ is a lowest node of a non-trivial blob $B$ in $N$, then its descendant
nodes and edges are the same across all of $N$'s rooted partners,
and we denote $v$'s descendant leaves by $L(v)$. 

\subsection{Basic blob structure}

\begin{defn}
 Let $N$ be a semidirected network on taxon set $X$, with  $T$ its tree of blobs. 
 Suppose $v$ is a lowest node in a non-trivial blob $B$ of $N$.
 Let $Y = X\smallsetminus L(v)$ and $\overline{T}_Y$ denote the reduced graph of $T_Y$.
 We say that $v$ \emph{links} $B$ if $N_{Y}$'s tree of blobs is more resolved than $\overline{T}_Y$.
 If the tree of blobs  of $N_{Y}$ is equal to $\overline{T}_Y$,
 then $v$ \emph{augments} $B$.
\end{defn}

Since any cut edge in $N$ remains a cut edge in $N_{Y}$
(if present in $N_{Y}$), 
$N_{Y}$'s tree of blobs must have all the edges
from $\overline{T}_Y$, and either be identical to
or more resolved than $\overline{T}_Y$,
and these notions of linking and augmenting lowest nodes are exhaustive.
For examples, note that the unique hybrid node in a 
$k$-taxon sunlet network ($k \ge 5$) is a linking
node, whereas $h_2$ in \Cref{fig:example02} (right) is an augmenting node.
Also, if a blob $B$ has a linking node, there must be at least 4 taxa in $Y$,
implying that $X$ has $n\geq 5$ taxa.
If an $m$-blob $B$ with $m\geq 5$ has a single lowest node,
then this node is not necessarily linking, as shown in \Cref{fig:example02} (right).

Note that these definitions depart from those in \cite{2025Rhodes-circular},
where linking and augmenting lowest nodes are defined in \emph{rooted} networks
with at least two lowest nodes.

\begin{figure}
\centering\includegraphics[scale=1]{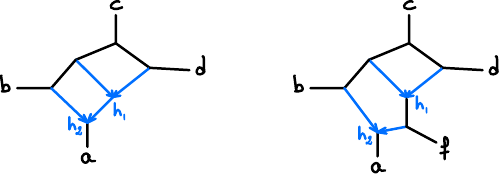}
\caption{Examples of bloblets.
Left: A 4-blob $B$ on a network that is not tree-child and not galled,
with $a$ below $B$'s lowest node $h_2$. 
The root skeleton tree, on the taxa $\{b,c,d\}$, is the star tree,
so $h_2$ is augmenting.
Right: A 5-blob $B$ on a network that is tree-child but not galled.
$B$'s unique lowest node $h_2$ is augmenting.
}\label{fig:example02}
\end{figure}

\begin{lemma}\label{lemma:Linkingnodes}
 Let $N$ be a bloblet on a set $X$, with internal blob $B$.
 Then the tree of blobs of $N_{X\setminus \{x\}}$ is not a star tree
 if, and only if, $x$ is below a linking (lowest) node of $B$, and
$x$ is the only leaf below that node.
\end{lemma}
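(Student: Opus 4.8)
The plan is to unfold the definitions until the statement becomes a concrete claim about which edges of the blob $B$ survive when one leaf is deleted. Write $w$ for the boundary node of $B$ at which the pendant edge of $x$ attaches; this is well defined because in a bloblet every non-leaf vertex lies in $B$. First I would observe that if $v$ is a lowest node of $B$ with $x\in L(v)$, then $v=w$: the unique edge into the leaf $x$ has other endpoint $w$, so any node ancestral to $x$ is ancestral to $w$ or equals $w$, and a node properly above $w$ cannot be lowest since $w$ is then a proper descendant of it lying in $B$. Consequently ``$x$ lies below a linking lowest node $v$ of $B$ with $L(v)=\{x\}$'' is equivalent to ``$w$ is a lowest node of $B$, $x$ is the only leaf attached at $w$, and $w$ links $B$.'' Moreover, when $L(w)=\{x\}$ we have $Y:=X\smallsetminus L(w)=X\smallsetminus\{x\}$, and since $N$ is a bloblet its tree of blobs is a star, so $\overline T_Y$ is again a star once $|Y|\ge 3$; hence ``$w$ links $B$'' says exactly ``$T(N_{X\smallsetminus\{x\}})$ is strictly more resolved than $\overline T_Y$,'' i.e.\ ``$T(N_{X\smallsetminus\{x\}})$ is not a star.'' (This invokes $|Y|\ge 4$, which holds automatically whenever $w$ links $B$, since a linking node forces at least four taxa in $Y$, and also whenever $T(N_{X\smallsetminus\{x\}})$ is non-star, since any tree on at most three leaves is a star; the cases $|X|\le 4$ are trivial, as then $X\smallsetminus\{x\}$ has at most three taxa and both sides of the equivalence are false.) With these identifications the backward implication is immediate, since ``$w$ links $B$'' is then literally the assertion that $T(N_{X\smallsetminus\{x\}})$ is not a star.

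For the forward implication, assume $T(N_{X\smallsetminus\{x\}})$ is not a star; by the above it suffices to prove $w$ is a lowest node of $B$ with $x$ its only attached leaf, since $w$ then automatically links $B$. I would argue the contrapositive in two cases. If some leaf $y\ne x$ is also attached at $w$, then removing $x$ merely deletes the pendant edge $wx$: any up-down path of $N$ with endpoint $x$ has the form $(x,w,\dots)$ and, on replacing $wx$ by $wy$, becomes an up-down path with endpoint $y$ through the same remaining edges, so every edge other than $wx$ still lies on an up-down path between two taxa of $X\smallsetminus\{x\}$, whence $N_{X\smallsetminus\{x\}}$ is the bloblet obtained by pruning $x$ and its tree of blobs is a star. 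If instead $x$ is the only leaf at $w$ but $w$ is not a lowest node of $B$, then in some rooted partner $w$ has a child $c$ in $B$, and since $x$ lies below no proper $B$-descendant of $w$, all descendant leaves of $c$ lie in $X\smallsetminus\{x\}$; one then checks that every edge of $B$ still lies on an up-down path between two taxa of $X\smallsetminus\{x\}$, rerouting the $x$-end of any up-down path carrying a given edge by descending from $w$ through $c$ to a descendant leaf of $c$. In both cases $B$ persists intact as the unique non-trivial blob of $N_{X\smallsetminus\{x\}}$, so its tree of blobs is a star, contradicting the assumption.

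The step I expect to be the main obstacle is this last case of the contrapositive: showing that no edge of $B$ is lost when $x$ hangs off a non-lowest boundary node. The rerouting of up-down paths must respect the up/down labelling at $w$ and the prohibition on two consecutive hybrid edges entering a node, and --- the genuinely delicate point --- one must ensure the rerouted path still joins two \emph{distinct} taxa; the naive reroute can fail here, and repairing it uses the $2$-edge-connectivity of $B$, equivalently the fact that every edge of $N$ lies on an up-down path between two taxa (so that one may assume $N_X=N$).
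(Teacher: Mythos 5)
Your setup and the two easy cases are sound and essentially identical to the paper's: you correctly reduce the statement to a trichotomy on the attachment node $w$ (a second leaf at $w$; $w$ lowest, where the linking/augmenting dichotomy is definitional; $w$ not lowest), and your handling of the small-$|Y|$ degeneracies and of the "second leaf at $w$" case is fine. But the third case --- $x$ the sole leaf at a non-lowest $w$ --- is exactly where the lemma's content lives, and there your proof has a genuine gap that you yourself flag: the claim that "every edge of $B$ still lies on an up-down path between two taxa of $X\smallsetminus\{x\}$" is left at "one then checks," and you concede that the naive reroute (descend from $w$ through a child $c$ in $B$) can produce a walk that revisits vertices of the original trek on the other side of its apex, or terminates at the same taxon it started from. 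No repair is supplied, only a gesture at "the $2$-edge-connectivity of $B$." As written, the forward implication is not proved. Note also that your target is stronger than necessary: for the tree of blobs to remain a star you do not need every edge of $B$ to survive in $N_{X\smallsetminus\{x\}}$; you only need that no surviving internal edge becomes a cut edge.

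The paper closes this case by a different and complete argument aimed at that weaker target. It supposes some edge $e=(s,t)$ becomes an internal cut edge of a rooted partner $N^+_{X\smallsetminus\{x\}}$, splitting it into components $M_s, M_t$ with taxa $X_s, X_t$; establishes that no directed path ending at a taxon of $X_s$ can pass through $e$; invokes the fact (Lemma~10 of \cite{2024Ane-anomalies}) that since $e$ is not a cut edge of $N^+$ it lies on an up-down cycle $C$, whose lowest node $\ell$ is shown --- using that $x$'s unique parent $w$ has a child edge in $B$ --- to sit above some taxon $b\in X_t$; and then splices a root-to-$b$ path through $C$ with a root-to-$a$ path ($a\in X_s$) to get an up-down path from $a$ to $b$ avoiding $e$, contradicting that $e$ is a cut edge. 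If you want to salvage your rerouting strategy instead, you would have to supply the truncation-at-lowest-common-node bookkeeping that the paper performs explicitly; the cut-edge contradiction is what lets the paper choose its paths freely from the root rather than having to extend a prescribed trek through a prescribed edge.
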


\begin{proof}
  By hypothesis, the tree of blobs of $N$ is a star tree.
  Let $x\in X$ with $v$ its adjacent internal node, which is a boundary node of $B$,
  and let $T_{x^-}$ denote the tree of blobs of $N_{X\setminus \{x\}}$.
  If $v$ is also adjacent to a second leaf, then $N_{X\setminus \{x\}}$
  contains $v$ and all edges and nodes of $B$, so $T_{x^-}$ is a star.
  Thus, we may hereafter assume that $v$ is adjacent to a single leaf 
  and that $L(v) = \{x\}$.

  Assume first that $v$ is a lowest node of $B$. Then,  by definition, 
  $T_{x^-}$ is a star if $v$ is augmenting and
  $T_{x^-}$ is not a star if $v$ is linking.
  
  Now suppose that $v$ is not a lowest node of $B$.
  Then in every rooted partner $N^+$, $v$ has a descendant edge in $B$.
  If $v$ is hybrid, this follows since $v$ is not lowest.
  If $v$ is a tree node, this follows since $v$ is in $B$:
  If all its descendant edges in $N^+$ were not in $B$, then $v$ would be
  incident to at most 1 edge in $B$,   a contradiction.
  
To show $T_{x^-}$ is the star tree, we will show that (unreduced)
$N_{X\setminus \{x\}}$ cannot have an internal cut edge.
Suppose for the sake of contradiction that it does,
and on a fixed partner $N^+$ rooted at some node $\rho$ of $N$,
pick an edge $e=(s,t)$ that becomes
an internal cut edge in $N^+_{x^-} = N^+_{X\setminus \{x\}}$.
Removing $e$ from $N^+_{x^-}$ leaves two connected components,
$M_s$ containing $s$ and $M_t$ containing $t$. Let $X_s$ be the taxa on $M_s$ and
$X_t$ the taxa on $M_t$, so
$X_s\sqcup X_t=X\setminus\{x\}$.

For use several times in the argument below we claim ($\dagger$): For any taxon $a\in X_s$, no
directed path in $N^+_{x^-}$ ending at $a$ may pass through $e$. To see this, suppose there were
such a path passing through $e$. Then truncating this path gives a path
from $t$ to $a$ in $N^+$ that avoids $e$.
But from $t$ there is also a directed path in $M_t$ to some $b \in X_t$.
After truncating at a lowest common node, these two paths combine to give an up-down path
from $a$ to $b$, which is therefore in $N^+_{x^-}$.
But this shows $M_s$ and $M_t$ are connected in $N^+_{x^-}$  by an up-down path avoiding $e$,
which contradicts that $e$ is a cut edge of $N^+_{x^-}$.

Since $e$ is not a cut edge in $N^+$, by Lemma~10 in \cite{2024Ane-anomalies} there is an up-down cycle $C$
in $N^+$ containing $e$.   Pick such a $C$
with lowest and highest nodes $\ell$ and $h$, respectively.
There is some taxon below $\ell$, so first suppose $x$ is.
Then since $x$ has $v$ as its only parent,  $v$ is also below $\ell$.
But since there is a child edge of $v$ in the blob,
$v$ must also be above another taxon.
Thus whether $x$ is below $\ell$ or not, there is some
other taxon $b$ below $\ell$. In fact $b$ must be in $X_t$,
since if $b$ were in
$X_s$ there would be a directed path from  $h$  through the part of $C$ containing $e$ to $\ell$ and then to $b$, 
contradicting ($\dagger$). Thus $\ell$ is above some $b\in X_t$.

Next consider a directed path avoiding $e$ from the root $\rho$
to $h$ then through part of $C$ to $\ell$ and on to $b$.
For any $a\in X_s$ choose a directed  path from $\rho$ to $a$.
By construction for the first, and  by ($\dagger$) for the second, neither of these
paths pass through $e$. Truncating them at their lowest common node yields an up-down path from $a$ to $b$ 
which is therefore in $N^+_{x^-}$. But since $e$ is not on this
path between $M_s$ and $M_t$ this contradicts that $e$ is a cut edge of $N^+_{x^-}$.
\end{proof}

A final structure lemma we need is the following.

\begin{lemma}\label{lem:3cycin}
Let $N$ be binary bloblet on 3 taxa, with a non-trivial 3-blob.
Then $N$ has a level-1 subnetwork with a single 3-cycle and no 2-cycles.
In other words, $N$ has a 3-sunlet subnetwork.
\end{lemma}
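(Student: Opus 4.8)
The plan is to fix a rooted partner $N^+$ of $N$ and exhibit the $3$-sunlet as an explicit subgraph. Write $X=\{x_1,x_2,x_3\}$. Since $N$ is binary, its unique non-trivial blob $B$ is biconnected (a blob equals a block in the binary case), each boundary node $b_i$ has degree $3$ with a single incident cut edge $e_i$ joining it to the leaf $x_i$, and $B$ has no parallel edges at a lowest node: if $h'$ were a lowest node with two parallel edges to a vertex $p$, a degree count on the $2$-edge-connected subgraph $\{p,h'\}$ would force $B$ to be a $2$-blob, contrary to hypothesis. One also checks (by the type/degree constraints of a binary rooted network) that $N$ cannot be rooted on a pendant edge or at a boundary node, so we may take $N^+$ with root $\rho$ strictly inside $B$; but all that is really used below is that $B$, regarded inside $N^+$, is an acyclic digraph.

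Next choose a lowest node $h$ of $B$. It is hybrid, hence has in-degree $2$ and out-degree $1$, and its child edge cannot lie in $B$ (else $h$ would have a descendant in $B$), so that edge is a cut edge; thus $h=b_j$ for some $j$ and $L(h)=\{x_j\}$. Relabel so $h=b_1$, and write $f_1=(p_1,h)$, $f_2=(p_2,h)$ for the two hybrid edges into $h$, with $p_1\ne p_2$ by the first paragraph. A short argument shows $h$ is the only sink of $B$ in $N^+$ (a boundary tree node and a non-lowest hybrid each keep a child edge inside $B$), so every vertex, and hence every edge, of $B$ lies on a directed path ending $\cdots\to p_i\to h$ for some $i$. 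Take such a directed path through $p_1$ and one through $p_2$, and trim them to their last common vertex $t$ with internally disjoint tails $Q_1:t\rightsquigarrow p_1\to h$ and $Q_2:t\rightsquigarrow p_2\to h$; this is possible since the two paths use the distinct edges $f_1,f_2$ just before $h$. Then $C=Q_1\cup Q_2$ is an undirected cycle in $B$ through $h$ with $t$ its only source and $h$ its only sink. Finally, using biconnectivity of $B$, attach $b_2$ and $b_3$ to $C$ — arranging that $Q_1,Q_2$ already pass through them, or else hanging off a vertex of $C$ a directed path down to $b_2$ (resp.\ $b_3$), kept internally disjoint from $C$ and from each other by the fan form of Menger's theorem applied inside $B$. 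The subgraph $H$ formed by $C$, these attachment paths and $e_1,e_2,e_3$, once degree-$2$ nodes are suppressed and the root undirected, is a semidirected network on $X$ whose sole non-trivial blob is a $3$-cycle with reticulation $h$ and no $2$-cycle — a $3$-sunlet subnetwork of $N$.

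The main obstacle is precisely this last bookkeeping: biconnectivity produces undirected cycles and paths, but to land on a genuine level-$1$ $3$-cycle we must use the acyclic orientation inherited from $N^+$ to guarantee that the extracted cycle has a single local source and a single local sink (the reticulation $h$), carries no parallel-edge $2$-cycle, and that the paths delivering $x_2$ and $x_3$ leave the cycle at a single vertex and point toward the leaves. Making this precise requires a small case analysis on how $b_2$ and $b_3$ sit relative to $p_1,p_2$ and $t$ — most delicately the configuration in which some $b_i$ cannot reach $p_2$ in $N^+$ — and, when $B$ has more than one lowest node, on which of $b_1,b_2,b_3$ is used as the reticulation; I expect to dispatch the awkward configurations by rerouting one of $Q_1,Q_2$ through the offending $b_i$, or by changing the choice of lowest node.
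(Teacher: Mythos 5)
Your strategy is genuinely different from the paper's --- the paper argues by induction on the number of hybrid nodes, using the chain-of-2-blobs structure of $M=N_{\{b,c\}}$ and the ``funnel'' of the lowest hybrid's leaf to delete hybrid edges one at a time until a single 3-cycle survives --- whereas you attempt a one-shot extraction of the sunlet from a rooted partner. The cycle $C$ through a lowest hybrid $h$ can indeed be built as you describe, but one supporting claim is false: $h$ need not be the only sink of $B$, since a 3-taxon blob can have two lowest hybrid nodes (two of the three leaves each pendant to a lowest hybrid), and then the other lowest hybrid lies on no directed path ending at $h$. That particular slip is repairable by routing both paths from the root, but it foreshadows the real problem.

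The substantive gap is the attachment of $b_2$ and $b_3$, which you defer as ``bookkeeping.'' First, there need not exist any directed path from a vertex of $C$ down to a boundary node: take $\rho\to u_1,u_2$; $u_1\to x_3,h'$; $u_2\to h',h$; $h'\to w$; $w\to x_2,h$; $h\to x_1$. Here $h$ is the unique lowest hybrid with parents $p_1=w$, $p_2=u_2$; if you route the path to $w$ through $u_2$, the last common vertex is $t=u_2$, the cycle is $u_2,h',w,h$, and $u_1$ (the boundary node of $x_3$) is an ancestor of every vertex of $C$, so no path can be ``hung off'' $C$ to reach it --- the correct sunlet requires instead rerouting the $p_1$-path through $u_1$ so the cycle passes over the apex $\rho$. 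Whether such a rerouting is always available (in particular when $B$ has several lowest hybrids and the offending $b_i$ is an ancestor of neither parent of the chosen $h$) is exactly the case analysis you postpone, and it is the content of the lemma. Second, even when attachment paths exist, Menger and biconnectivity supply only undirected internally disjoint paths, whereas you need directed, leaf-pointing paths leaving $C$ at two \emph{distinct} non-hybrid vertices; if both leave at the same vertex, the reduction collapses $C$ to a pair of parallel edges, i.e.\ precisely the 2-cycle the statement forbids. As written, the proposal establishes the existence of one tree cycle through $h$ but not of a 3-sunlet; I would recommend either supplying the full case analysis or switching to the paper's induction, which removes one hybrid at a time and never has to exhibit the final cycle directly.
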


\begin{proof} 
We proceed by induction on the number of hybrid nodes in the 3-blob. 
For the base case, when the 3-blob has a single hybrid, $N$ is a 3-cycle network and there is nothing to prove.

Assuming now that $N$'s 3-blob has more than one hybrid node, then
there exists a lowest hybrid node with a descendant taxon, say $a$.
Let $M=N_{\{b,c\}}$ be the subnetwork composed of all edges on
up-down paths connecting the taxa $b$ and $c$.
Thus $M$ has the form of a chain of 2-blobs (some possibly trivial) joined by cut edges.
The \emph{funnel} of $a$ in $N$ is all edges in up-down paths from $a$ to $M$,
terminating at the funnel's \emph{attachment nodes} on $M$, which in number are at least 2.
If any attachment node $v$ of $a$'s funnel is in a non-trivial 2-blob of $M$,
then we may pick one path in the funnel from $v$ to $a$;
and removing from $N$ all other funnel edges not on this path
gives a subnetwork with 1 fewer hybrid node
and a non-trivial 3-blob, and possibly some 2-blobs. By choosing some semidirected up-down path
 through each 2-blob and deleting all edges in 2-blobs not on these paths, we reduce to a network
for which the inductive hypothesis applies.

Otherwise all attachment nodes are trivial 2-blobs of $M$, at which two cut edges of $M$ join. 
Thus any up-down path in $M$ between
$b$ and $c$ contains all attachment nodes of $a$'s funnel.
Picking two of these attachment nodes, $v,w$ and retaining only a path from $b$ to $c$ and one path each 
from $v,w$ to $a$ in the funnel yields the desired subnetwork.
\end{proof}

\section{Galled and tree-child semidirected networks}\label{sec:galledTreeChild}

A rooted phylogenetic network $N^+$ is  \emph{tree-child}, if every
non-leaf node $v$ has a child that is a tree node  \cite{Steel2016}.
A semidirected network $N$ is \emph{strongly tree-child} (or simply \emph{tree-child})
if all its rooted partners are tree-child, and \emph{weakly tree-child} if at least one of its rooted partners is tree-child
\cite{2024MaxfieldXuAne}.  Examples of semidirected networks that are strongly, weakly, or not tree-child 
are given in \Cref{fig:example1galled}.
A sunlet with 3 or more leaves is strongly tree-child.

A (rooted or semidirected) network $N$ is \emph{galled} if, for every hybrid node $h$
and every pair of partner hybrid edges $e =(v,h)$ and $e'=(v',h)$,
there exists a cycle in $N$ (considering edges as undirected)
that contains $e$ and $e'$ and no other hybrid edges.
Such a cycle is  called a \emph{tree cycle}. 
For example, the network $N_3$ in \Cref{fig:example01} is galled, but
neither of the networks in \Cref{fig:example02} are galled.
The term ``galled network'' should not be confused with ``galled trees,''
a class of networks now commonly referred to as level-1 networks (when binary).  

\smallskip

In the next section, we  prove that galled tree-child bloblets
are identifiable --- provided additional assumptions hold ---
and then extend these results to networks with multiple internal blobs.
In preparation for this, we establish some key properties and introduce
definitions leading to a new subclass of galled tree-child networks.
Some of the properties of galled networks we need have already been developed, for example in
\cite{2007HusonKlopper,HusonRuppScorn,2017Gunawan,2020Gunawan}, 
though we generally give self-contained proofs for the sake of readability.

Let $N$ be a galled network  with  hybrid node $h$.  Then $h$ is a lowest node of the blob containing it
and,  if in addition $h$ is the child of exactly two hybrid edges, 
then $h$ is in a \emph{unique} (tree) cycle.

If $N$ is a galled bloblet, then each hybrid node and its children are the nodes of one skeleton tree,
with all remaining nodes contained in (and connected by) the
unreduced root skeleton tree.   \Cref{fig:example1galled}  illustrates the root skeleton tree may or may not be trivial, and can have 0, 1 or more taxa.

\begin{figure}
\includegraphics[scale=1]{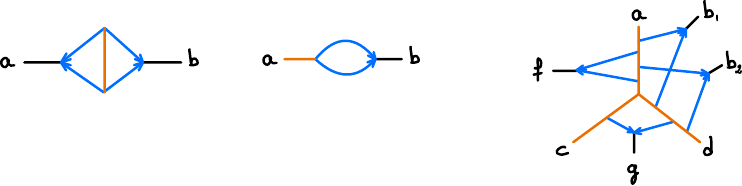}
\caption{Examples of galled bloblets $N$,
with root skeleton trees $T$ shown in orange.
Left: $N$ is neither strongly nor weakly tree-child, with no taxa on $T$.
Middle: $N$ is weakly but not strongly tree-child,
as some rooted partners are tree-child
(root at the hybrid's parent node or along a hybrid edge)
and some are not (root  along the edge incident to $a$).
$T$ has one taxon, $a$. Right: $N$ is tree-child, but 
not $\mathfrak{C}_4$.  $T$  has taxa $\{a,c,d\}$.   
The reduced graph of the subnetwork $N_{\{a,c,d,b_1,b_2,g\}}$ ($f$ omitted) is $\mathfrak{C}_4$.
}\label{fig:example1galled}
\end{figure}

Suppose  further that a galled bloblet $N$ is weakly tree-child, and
consider a tree-child rooted partner $N^+$ of $N$.
Since $N^+$ is tree-child, there is at least one path of tree edges from
its root to some leaf, and the taxon set $X_0$ of the
root skeleton tree is not empty. See, for example, \Cref{fig:example1galled}, middle.
The next lemma shows that if  $N$ is
(strongly) tree-child, then the root skeleton tree has all leaves labelled, and thus has at least 2 taxa, as in
\Cref{fig:example1galled} right, where $X_0=\{a,c,d\}$.

\begin{lemma}\label{lem:1galled-skeleton}
Let $N$ be a  galled, tree-child bloblet, and $X_h$ its taxa with hybrid parent nodes.
Then the unreduced root skeleton tree of $N$
is a phylogenetic tree that is composed of all of $N$'s tree nodes and
tree edges, except for $X_h$ and pendant edges leading to taxa in $X_h$.
\end{lemma}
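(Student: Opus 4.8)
The plan is to combine the structural facts about galled bloblets recalled just before the lemma with a rooting argument that exploits the tree-child hypothesis. First I recall that every hybrid node $h$ of $N$ is a lowest node of the unique non-trivial blob $B$ and, as already noted, that the skeleton tree of $N$ containing $h$ consists exactly of $h$ and its children. A child $c$ of $h$ lies strictly below $B$, so $c$ is not the tail of a hybrid edge (such a tail would lie on the tree cycle through that edge, hence in $B$, contradicting that $h$ is lowest in $B$), and $c$ has no tree-edge child either (it would lie in $h$'s skeleton tree, forcing it to be a further child of $h$); hence $c$ is a leaf of $N$, and having a hybrid parent it lies in $X_h$. Conversely each taxon in $X_h$ is the child of a unique hybrid node. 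So when all hybrid edges are deleted, the $k+1$ components furnished by \Cref{lem:skeletonnum} are the root component $C_0$ together with, for each hybrid node $h$, the star $\{h\}\cup\operatorname{children}(h)$ with its pendant edges.

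The ``composed of'' assertion is then bookkeeping. The nodes of $N$ lying outside $C_0$ are exactly the hybrid nodes and the taxa in $X_h$, so the nodes of $C_0$ are exactly the tree nodes of $N$ not in $X_h$; likewise the edges of $N$ outside $C_0$ are the hybrid edges and the pendant edges to $X_h$, so the edges of $C_0$ are exactly the tree edges of $N$ that are not such pendant edges. And $C_0$ is a tree, being one of the skeleton trees of \Cref{lem:skeletonnum}.

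It remains to show $C_0$ is a \emph{phylogenetic} tree, i.e.\ that every leaf of $C_0$ is labeled; this is where tree-child is used and is the main obstacle. From the facts above, the taxa lying in $C_0$ are precisely those of $X\smallsetminus X_h$ and each is a (labeled) leaf of $C_0$, so it suffices to exclude an unlabeled leaf. Suppose $v$ were a leaf of $C_0$ that is not a taxon. Then $v$ is an internal tree node of $N$, incident in $C_0$ to a single tree edge $vw$, while every remaining edge of $N$ at $v$ is a hybrid edge; since $v$ is a tree node it is the tail of each of these, and, as they lie outside the root component, they are oriented $v\to h_i$ (a hybrid node) in every rooted partner. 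I would then pick a rooted partner $N^+$ in which $vw$ is oriented toward $v$: as $v$ is a leaf of the tree $C_0$, it is enough to place the root on any edge of $C_0$ other than at the $v$-end of $vw$ — for instance on a pendant edge to a taxon of $X\smallsetminus X_h$, or, if $C_0$ is the single edge $vw$, by subdividing $vw$. In $N^+$ every child of $v$ is then a hybrid node, contradicting that $N^+$ is tree-child; since $N$ is (strongly) tree-child, this contradiction shows no such $v$ exists, so $C_0$ is phylogenetic. The delicate point is precisely this last step: in \emph{some} rooted partners — namely those obtained by rooting on one of the hybrid edges $v\to h_i$ — the node $v$ does acquire the tree child $w$, which is why weak tree-child would not suffice and one must use the full hypothesis and display a specific non-tree-child partner, taking a little care with the root placement when $C_0$ is small.
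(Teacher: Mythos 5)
Your proof is correct and follows essentially the same route as the paper's: identify the non-root skeleton trees as a lowest hybrid node with its leaf children, then choose a rooted partner in which the candidate unlabeled leaf $v$ of the root skeleton tree has its unique tree edge as parent edge, and invoke the (strong) tree-child hypothesis. The paper phrases the last step directly (rooting at the neighbor $u$ and following a tree-edge path from $v$ to a labeled leaf, which must be empty), whereas you argue by contradiction via $v$ having only hybrid children; your remark that weak tree-child would not suffice matches the paper's middle example in Figure~\ref{fig:example1galled}.
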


\begin{proof}
Let $\widetilde{T}$ denote the unreduced root skeleton tree of $N$.
If $N$ has $k$ hybrid nodes, by Lemma \ref{lem:skeletonnum} it has
$k$ non-root skeleton trees in addition to the root skeleton tree.
Because $N$ is a galled bloblet, each non-root skeleton tree
consists of a lowest hybrid node in the blob, its descendant leaves,
and the pendant edges connecting them.
Thus $\widetilde{T}$ is composed of all of $N$'s other tree edges and tree nodes.

To finish the proof, we need to show that $\widetilde T$ is a phylogenetic tree, that is,
it does not have unlabeled leaves.
By \cite{2024MaxfieldXuAne}, $N$ can be rooted at any node in $\widetilde T$.
Let $e=uv$ be a pendant edge in $\widetilde T$,
with leaf node $v$, and consider the rooted partner $N^+_u$ rooted at $u$.
Since $N_u^+$ is tree-child, there is a (possibly empty)
directed path of tree edges starting from $v$ to some labeled leaf $y$.  
Since this path contains only tree edges, it must be in $\widetilde T$
which implies that $v=y$ is labeled.
\end{proof}

As a corollary to Lemma~\ref{lem:1galled-skeleton},
each hybrid edge $e$ in a galled tree-child bloblet $N$ 
has its parent node $u$ in $N$'s unreduced root skeleton tree.
If $u$ is binary in that tree, then $u$ is suppressed when reducing $\widetilde T$
and we say that $e$ \emph{attaches to an edge} in the root skeleton tree $T$.
If $u$ is not suppressed in $T$,
then we say that $e$ \emph{attaches to a node} in $T$,
or that it attaches to multiple edges in $T$ (those incident to $u$ in $T$).

\begin{defn}
Let $N$ be a semidirected network with a blob $B$.
The \emph{bloblet generated by} $B$ 
is the subgraph of $N$ comprised of the edges in $B$ and the cut-edges incident to $B$,
with new and distinct labels assigned to all unlabeled leaves. 
\end{defn}

\noindent
Note that the bloblet generated by a blob $B$ is semidirected,
and its topology depends only on $B$ and the number of cut-edges in $N$
incident to $B$'s boundary nodes.  In \Cref{fig:example01}, for example,
the network $N_3$ is the bloblet generated by $B_3$ in $N$.
If $N$ is itself a bloblet with internal blob $B$, then
the bloblet generated by $B$ is simply $N$. 

\medskip

We now define the new class of networks central to this work.

\begin{defn}[\textbf{Network Class $\mathfrak{C}_k$}]\label{def:classCk}
  Let $N$ be a semidirected network, $B$ a blob of $N$,
  and $N(B)$ the bloblet generated by $B$.
  Suppose that the leaf set of $N(B)$ is $Y_0\sqcup Y_\ell$ where $Y_\ell$
  are the leaves below the lowest nodes of $B$.
  We say that $B$ \emph{is} $\mathfrak{C}_k$, or \emph{in the class} $\mathfrak{C}_k$, $k\ge 3$,
  if $N(B)$ is reduced, galled, tree-child, with all hybrid nodes of out-degree~1,
  and for every $y\in Y_\ell$, the internal blob of the reduced graph of $N(B)_{Y_0 \sqcup \{y\}}$
 is a tree cycle of size $k$ or more.
   A network \emph{is} $\mathfrak{C}_k$ if it is reduced
  and all its blobs are $\mathfrak{C}_k$.
\end{defn}

\medskip

To illustrate these ideas, we consider again the networks displayed in \Cref{fig:example01}.
The network $N_3$  is a galled tree-child bloblet with $Y_\ell=\{g_1,g_2\}$, and in class $\mathfrak{C}_k$ for $k = 3, 4, 5$.
Since the bloblet generated by $B_2 \subset N'$ is not tree-child, this bloblet (and therefore $N'$) 
can not be in $\mathfrak{C}_k$ for any $k$.
In like manner, since $N_1$ is only weakly tree-child, $N$ is not $\mathfrak{C}_k$, for any $k$.

Trivial blobs (single nodes) have no hybrids so they are $\mathfrak{C}_k$ for all $k$.
Using the notation of \Cref{def:classCk}, note that if the hybrid parent
of $y\in Y_\ell$ were not bicombining,
then the internal blob of $N(B)_{Y_0 \sqcup \{y\}}$ would have at least 2 cycles
and this bloblet would not be $\mathfrak{C}_k$.  We state this formally.

\begin{lemma}\label{lem:CkHybridsBicombining}
	If $N$ is a $\mathfrak{C}_k$ network, then all of its
	hybrid nodes are bicombining, hence binary.
\end{lemma}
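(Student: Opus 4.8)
The plan is to fix an arbitrary hybrid node $h$ of $N$, locate it inside a $\mathfrak{C}_k$ blob, and read off its in-degree from the defining $\mathfrak{C}_k$ condition after one explicit computation of an induced network. Since every blob of $N$ is $\mathfrak{C}_k$, the bloblet generated by each blob is galled, so $N$ itself is galled; hence (by the fact recalled just before \Cref{lem:1galled-skeleton}) $h$ is a lowest node of the blob $B$ of $N$ containing it, and $B$ is non-trivial because $h$'s partner hybrid edges lie in a tree cycle. Let $N(B)$ be the bloblet generated by $B$; by hypothesis it is reduced, galled, tree-child, with all hybrid nodes of out-degree $1$, and $B$ is its internal blob. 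In $N(B)$, the node $h$ is a lowest node of $B$ with out-degree $1$, so its unique incident tree edge leaves $B$ (as $h$ is lowest in $B$), hence is a cut edge of $N(B)$, pendant to a leaf $y$, and $y \in Y_\ell$.

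Next I would record the skeleton picture of $N(B)$. By \Cref{lem:1galled-skeleton} its unreduced root skeleton tree $\widetilde T$ is a phylogenetic tree whose leaf set is $Y_0$ (the remaining leaves, those in $Y_\ell$, being the children of the lowest hybrid nodes), and $|Y_0| \ge 2$ because $N(B)$ is tree-child; by the corollary to \Cref{lem:1galled-skeleton}, the parent of every hybrid edge lies in $\widetilde T$. Concretely, $N(B)$ equals $\widetilde T$ together with, for each lowest hybrid node $h_j$, that node, its single pendant child leaf $y_j$, the edge between them, and the hybrid edges feeding into $h_j$ (each from a node of $\widetilde T$).

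The heart of the argument is to compute $N(B)_{Y_0 \sqcup \{y\}}$ exactly. Deleting a leaf $y_j$ with $h_j \ne h$ also deletes $h_j$ and every hybrid edge into $h_j$, since any up-down path through $h_j$ uses at most one of $h_j$'s hybrid edges and therefore must use $h_j$'s now-deleted pendant edge. Conversely, every edge of $\widetilde T$ is retained (each lies on the tree path between two leaves of $\widetilde T$, and that path is an up-down path since it contains no hybrid edges), and every hybrid edge $e_i = (v_i,h)$ into $h$ is retained, witnessed by the up-down path running from some leaf of $Y_0$ through $\widetilde T$ to $v_i$, then along $e_i$ to $h$, then along the pendant edge to $y$ (only one hybrid edge appears, so this is a legitimate up-down path). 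Thus, writing $d = \mathrm{indeg}_{N(B)}(h)$, the network $N(B)_{Y_0 \sqcup \{y\}}$ is precisely $\widetilde T$ with a new node $h$ joined to $v_1,\dots,v_d$ by hybrid edges and carrying the new pendant leaf $y$. Its only non-trivial blob is $D = S \cup \{h, e_1, \dots, e_d\}$, where $S$ is the minimal subtree of $\widetilde T$ spanning $\{v_1,\dots,v_d\}$; hence, after reduction, the internal blob of the reduced graph of $N(B)_{Y_0 \sqcup \{y\}}$ is the reduction of $D$, whose cycle rank equals $d-1$ (since $S$ is a tree and reduction does not change cycle rank). As $B$ is $\mathfrak{C}_k$, this internal blob must be a tree cycle, which has cycle rank $1$, forcing $d = 2$. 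Since $h$ also has out-degree $1$, it is binary, and as $h$ was arbitrary, all hybrid nodes of $N$ are bicombining.

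The step I expect to require the most care is the explicit description of $N(B)_{Y_0 \sqcup \{y\}}$: checking that (i) removing the other $Y_\ell$-leaves removes exactly the other lowest hybrid nodes and their hybrid edges, nothing more; (ii) all of $\widetilde T$ and all hybrid edges at $h$ survive; and (iii) the result is genuinely a bloblet whose single non-trivial blob is $D$, so that the ``internal blob'' in \Cref{def:classCk} really is the blob containing $h$. All three rely on the galled/tree-child skeleton structure of \Cref{lem:1galled-skeleton} and its corollary together with the out-degree-$1$ hypothesis; once they are secured, the count $d-1 = 1$ is immediate. Equivalently, one may phrase this contrapositively: if $d \ge 3$ then $D$ already contains two independent cycles, so the internal blob cannot be a single cycle.
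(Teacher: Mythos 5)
Your proof is correct and follows essentially the same route as the paper, which justifies this lemma only by the one-sentence remark preceding its statement: if the hybrid parent of $y\in Y_\ell$ had in-degree $d\ge 3$, the internal blob of $N(B)_{Y_0\sqcup\{y\}}$ would contain $d-1\ge 2$ independent cycles and hence could not be a tree cycle. Your write-up simply makes that cycle-count argument fully explicit (via the skeleton structure of \Cref{lem:1galled-skeleton}), and all the steps check out.
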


In a galled network, if a hybrid creates a tree cycle of size $k$ or more, 
then the cycle contains $k-2$ or more tree edges.
When $k=4$, the criterion of \Cref{def:classCk} thus simply means that  partner hybrid edges
do not attach to the same edge of the network's (reduced) skeleton forest.
Similar reasoning for $k=3,5$ yields the following. 

\begin{lemma}\label{lem:C41galled}
A bloblet $N$ is $\mathfrak{C}_3$ if it is reduced, galled,  tree-child,
with all hybrid nodes binary,
and partner hybrid edges do not attach to the same node of
$N$'s unreduced root skeleton tree.
It is $\mathfrak{C}_4$ if in addition the
partner hybrid edges do not attach to the same edge in its root skeleton tree,
and $\mathfrak{C}_5$ if, in addition,
partner hybrid edges do not attach to adjacent edges in its root skeleton tree.
\end{lemma}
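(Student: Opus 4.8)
The plan is to prove the three stated sufficient conditions in turn, in each case reducing to a per-hybrid computation: for each hybrid $h$ of $N$, with $y_h$ the leaf below it, I would identify the non-trivial blob of the reduced graph of $N_{Y_0\cup\{y_h\}}$ as a specific tree cycle and bound its size from below. For the structural setup, note first that since $N$ is a bloblet, $N(B)=N$ for its internal blob $B$, $Y_\ell$ is the set of leaves below the lowest nodes of $B$, and $Y_0=X\setminus Y_\ell$. By hypothesis every hybrid node is binary, hence bicombining; being galled, each hybrid $h$ is then the unique lowest node of its unique tree cycle $C_h$, and I write $e_h=(u_h,h)$, $e_h'=(u_h',h)$ for the partner hybrid edges. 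By \Cref{lem:1galled-skeleton} the unreduced root skeleton tree $\widetilde T$ is a phylogenetic tree comprising every tree node and tree edge of $N$ except those leading to taxa below hybrids; in particular $Y_\ell$ equals the set $X_h$ of taxa with hybrid parents, each hybrid $h$ has a single leaf child $y_h$, and $Y_0$ is the leaf set of $\widetilde T$, which is nonempty by tree-childness (as noted before \Cref{lem:1galled-skeleton}). By the corollary to \Cref{lem:1galled-skeleton}, $u_h,u_h'\in\widetilde T$, so $C_h=e_h\cup e_h'\cup P_h$, where $P_h$ is the unique $u_h$--$u_h'$ path in $\widetilde T$.

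\emph{The restricted blob.} Fix $h$ and put $y=y_h$. I would first show that, before reduction, $N_{Y_0\cup\{y\}}$ is exactly $\widetilde T$ together with $h$, the edges $e_h,e_h'$, and the pendant edge $hy$: every edge of the tree $\widetilde T$ lies on a path between two of its leaves, hence on an up-down path between two taxa of $Y_0$, so $\widetilde T$ is retained in full; deleting any other hybrid's leaf $y_{h'}$ removes $h'$ together with its two hybrid edges, since a node left incident to only two hybrid edges, both directed into it, cannot lie on any up-down path between retained taxa; and $e_h,e_h',hy$ survive via suitable up-down paths from $y$ to $Y_0$-leaves. Hence the unique non-trivial blob of $N_{Y_0\cup\{y\}}$ is $C_h$, which is a tree cycle, and that of the reduced graph is $\overline{C_h}$. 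It then remains only to bound $\lvert\overline{C_h}\rvert$ below for every $h$, since $N(B)=N$ is otherwise immediately seen to be reduced, galled, tree-child with out-degree-$1$ hybrids, as \Cref{def:classCk} requires.

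\emph{Counting, and the main obstacle.} The size of $\overline{C_h}$ is the number of nodes of $C_h$ surviving reduction in $N_{Y_0\cup\{y\}}$. Here $h$ survives (degree $3$), and so do $u_h$ and $u_h'$: each, being a non-leaf node of the phylogenetic tree $\widetilde T$, has a $\widetilde T$-edge off $C_h$ leading to a retained $Y_0$-leaf, which makes it degree $\ge 3$; an interior node of $P_h$ survives iff it has $\widetilde T$-degree $\ge 3$, equivalently iff it is a non-suppressed node of the reduced skeleton tree $T$. Thus $\lvert\overline{C_h}\rvert = 3 + \#\{\text{nodes of }T\text{ strictly interior to }P_h\}$, which I would then translate into the attachment data of $e_h,e_h'$: if both attach to nodes of $T$ at $T$-distance $d$ the size is $d+2$; if one attaches to an edge $f$ of $T$ and the other to a node, it is $3$ plus the $T$-distance from that node to the nearer endpoint of $f$; and if both attach to edges $f,f'$ it is $3$ when $f=f'$, $4$ when $f\ne f'$ share an endpoint, and $\ge 5$ otherwise. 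Reading off lower bounds — and using, as in the corollary to \Cref{lem:1galled-skeleton}, that a hybrid edge attaching at a node ``attaches to'' every edge incident to it — gives $\lvert\overline{C_h}\rvert\ge 3$ precisely when the partners do not attach to a common node of $\widetilde T$, $\ge 4$ when they moreover do not attach to a common edge of $T$, and $\ge 5$ when in addition their edges of $T$ are non-adjacent; since this holds for every hybrid (equivalently every $y\in Y_\ell$), the three implications follow. I expect the main obstacle to be precisely the bookkeeping in that case analysis — chiefly the point that $u_h$ and $u_h'$ remain genuine vertices of $\overline{C_h}$ even when they are suppressed in $T$, so that ``attaching to an edge'' still contributes a cycle vertex — rather than anything conceptually deep.
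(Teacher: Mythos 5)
Your proof is correct and takes essentially the same route as the paper, which states this lemma without a formal proof, justifying it only by the preceding observation that a tree cycle of size $k$ contains $k-2$ tree edges between the two attachment points; your argument is a careful, fully detailed expansion of exactly that counting. The details you supply — that $N_{Y_0\cup\{y\}}$ retains all of $\widetilde T$ plus the one surviving hybrid and its incident edges, and that $u_h$ and $u_h'$ always survive reduction even when suppressed in $T$ — are correct and fill in what the paper leaves implicit.
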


For example, the bloblet $N$ in \Cref{fig:example1galled} (right) is 
$\mathfrak C_3$, but not $\mathfrak{C}_4$ because 
the hybrid edges above $f$   
attach to the same skeleton tree edge, creating a tree cycle of length 3. 
The reduced subnetwork of $N$ on all taxa except $f$ is
$\mathfrak{C}_4$, but not $\mathfrak{C}_5$, since each subnetwork on $\{a,c,d\}$ and any one of 
the hybrid taxa $b_1$, $b_2$, or $g$ is a  level-1 network with a 4-cycle.
More generally, $\mathfrak{C}_{k+1} $ is a proper subclass of $\mathfrak{C}_k$.

Finally, in closing this section, we prove that the $\mathfrak{C}_k$ property
implies a lower bound on the number of taxa.

\begin{lemma}\label{lem:Cknet}
A network $N$ in $\mathfrak{C}_k$ has no non-trivial $m$-blob with $m<k$.
If $N$ has at least one non-trivial blob, then $N$ has at least $k$ taxa.
\end{lemma}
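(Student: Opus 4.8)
The plan is to reduce the statement to a claim about a single bloblet and then count that bloblet's leaves by passing to a level-$1$ induced subnetwork. It suffices to prove the first assertion: once every non-trivial $m$-blob of $N$ is known to have $m\ge k$, the second follows because any network with an $m$-blob has at least $m$ taxa (as noted after the definition of an $m$-blob). So fix a non-trivial blob $B$ of $N$; by hypothesis $B$ is $\mathfrak{C}_k$. Let $N(B)$ be the bloblet it generates. The cut edges of $N$ incident to $B$ are in bijection with the leaves of $N(B)$ --- no two such cut edges can share their endpoint outside $B$, as that would produce a cycle through one of them --- so if $B$ is an $m$-blob then $N(B)$ has exactly $m$ leaves, and it is enough to show $N(B)$ has at least $k$ leaves.

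Write the leaf set of $N(B)$ as $Y_0\sqcup Y_\ell$ as in \Cref{def:classCk}. Since $B$ is non-trivial it has a lowest node $h$; $h$ is hybrid, is bicombining by \Cref{lem:CkHybridsBicombining}, has out-degree $1$ because $N(B)$ is $\mathfrak{C}_k$, and its two hybrid edges lie in $B$. Its unique child edge must be a cut edge of $N$ incident to $B$ (otherwise $h$ would be a degree-$2$ hybrid node in the reduced network $N(B)$), so a leaf of $N(B)$ lies below $h$, and hence $Y_\ell\neq\emptyset$. Fix $y\in Y_\ell$. By \Cref{def:classCk}, the reduced graph $M$ of $N(B)_{Y_0\sqcup\{y\}}$ has internal blob a tree cycle $C$ of some size $j\ge k$, and $M$ has exactly $|Y_0|+1$ leaves.

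The core step is a leaf count in $M$. Each of the $j$ nodes of $C$ has degree at least $3$ in $M$: the unique hybrid node on $C$ is bicombining with out-degree $1$, and the remaining nodes of $C$ are tree nodes, which cannot have degree $2$ since $M$ is reduced. Deleting the $j$ edges of $C$ therefore disconnects $M$ into at least $j$ components --- if two nodes of $C$ stayed connected after the deletion, $C$ would not be a maximal $2$-edge-connected subgraph --- and each such component contains at least one leaf of $M$. Hence $|Y_0|+1\ge j\ge k$, so $N(B)$ has $|Y_0|+|Y_\ell|\ge(k-1)+1=k$ leaves, which completes the argument.

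I expect the only friction to be bookkeeping rather than a substantive obstacle. Because the networks are not assumed binary, one must check carefully that $N(B)$ has exactly $m$ leaves, that the child edge of a lowest node is genuinely a cut edge, that passing to $M$ only shrinks out-degrees and cycle lengths (so that $C$ really has at least $k$ nodes), and that each of the components obtained by deleting the edges of $C$ contains a leaf. None of these steps requires computation, but each should be spelled out with care.
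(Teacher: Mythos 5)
Your proof is correct and follows essentially the same route as the paper's: reduce to the bloblet generated by $B$, use the $\mathfrak{C}_k$ condition to get a level-1 induced subnetwork on $Y_0\sqcup\{y\}$ with a cycle of size $j\ge k$, deduce $|Y_0|\ge k-1$, and add at least one taxon in $Y_\ell$. You merely spell out in more detail the counting step that the paper compresses into ``so the root skeleton tree has at least $k-1$ taxa.''
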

\begin{proof} 
If $N$ is a bloblet in $\mathfrak{C}_k$ with a non-trivial $m$-blob $B$,
then each of its hybrid nodes is in a tree cycle with at least $k$ edges,
so the root skeleton tree has at least $k-1$ taxa. Since at least  one taxon descends from each hybrid node,
the bloblet's number of taxa satisfies $m\geq k$. 
If $B$ is an $m$-blob in  a general network $N$ in $\mathfrak{C}_k$, then
the bloblet generated by $B$, is in $\mathfrak{C}_k$, and the result follows from the bloblet case.
The final statement is immediate.
\end{proof}

\section{Identifiability of galled tree-child networks with large cycles}\label{sec:id}

Depending on data type and model, certain specific properties can be established that aid in showing
network identifiability.  We now explicitly state a number of these as assumptions, so that we
may show their role in proving network identifiability through combinatorial arguments. 
Later, in  \Cref{sec:moddat}, we prove (subsets of) these assumptions hold for
specific models with quartet concordance factors as data.
All networks are semidirected, without further restriction unless stated explicitly.

\begin{description}
\item[A-ToB.] If $N$ is a semidirected network, the topology of
its tree of blobs is identifiable.

\item[A-4circ.]
 If $N$ is known to be  a 4-taxon level-1 network,
 the set of circular orders congruent with this level-1 network is identifiable.
 Specifically, if $N$ has a non-trivial split, this split is identifiable;
 and if $N$ has a 4-cycle, then the circular order of taxa around this cycle is
 identifiable.  
 
\item[A-3blob.] If $N$ is an extended bloblet
with an internal 3-blob $B$, then whether $B$ is trivial or non-trivial
is identifiable.

\item[A-4len.] If $N$ is known to be a 4-taxon extended bloblet with an internal 4-cycle
whose hybrid node is known, then the
 lengths of tree edges in the cycle are identifiable.
 If $N$ is known to be a 4-taxon tree, possibly extended by 2-blobs on its pendant edges,
 then the length of the internal tree edge is identifiable.
\item[A-hyb$(\mathfrak{C})$.]
 If $N$ is known to be an extended bloblet
 with internal blob $B$ in some class $\mathfrak{C}$ of networks,
 then the set of taxa below the hybrid nodes in $B$
 is identifiable. 
\end{description}

\begin{rmk}
For a class $\mathfrak{C}$ that includes level-1 networks,
A-4circ and A-hyb$(\mathfrak{C})$ together give full topological
identifiability of 4-cycles in level-1 networks in $\mathfrak{C}$.
\end{rmk}

We first prove a result on identifying hybrid nodes.

\begin{lemma}\label{lem:identifyXh}
  Let $\mathfrak{C}_5^e$ be the class of
  extended bloblets whose internal blob $B$ is $\mathfrak{C}_5$.
  If A-ToB holds, then A-hyb($\mathfrak{C}_5^e$) holds.
  More specifically, taxon $x$ is  below a hybrid node of $B$
  if and only if there is a subset $Y$ of 4 taxa
  such that the tree of blobs $T(N_{Y\cup\{x\}})$ is a star tree but the tree of blobs $T(N_{Y})$ is resolved.
\end{lemma}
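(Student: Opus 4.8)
The plan is to establish the biconditional characterization of taxa below hybrid nodes directly, using Lemma~\ref{lemma:Linkingnodes} together with A-ToB to translate the combinatorial criterion into an identifiability statement. First I would reduce to the bloblet case: since $N$ is an extended bloblet with internal blob $B$, for any set $Z$ of taxa the induced network $N_Z$ has tree of blobs that is a star tree unless $Z$ "witnesses" a nontrivial cut edge inside what remains of $B$. Because $\mathfrak{C}_5$ networks are galled and tree-child with all hybrid nodes bicombining (Lemma~\ref{lem:CkHybridsBicombining}), the lowest nodes of $B$ are exactly its hybrid nodes, each sitting in a unique tree cycle of size $\ge 5$; and by Lemma~\ref{lemma:Linkingnodes}, for a bloblet, deleting a single taxon $x$ (the unique leaf below a boundary node $v$) yields a non-star tree of blobs precisely when $v$ is a linking lowest node of $B$ with $L(v)=\{x\}$.

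The key observation is that in a $\mathfrak{C}_5$ blob, every lowest (hence hybrid) node is a \emph{linking} node, not merely augmenting: its tree cycle has at least $5$ edges, so removing the leaves below it leaves at least $4$ boundary-attached taxa arranged around a path of $\ge 3$ remaining tree edges, which forces a genuine internal cut edge in the induced network — this is exactly the content already extracted in the proof of Lemma~\ref{lem:Cknet}, where the root skeleton tree is shown to carry $\ge k-1$ taxa. So the forward direction runs: if $x$ is below a hybrid node $h$ of $B$, pass to the bloblet generated by $B$, and if $x$ is the unique leaf below $h$ take $Y$ to be $4$ of the $\ge 4$ taxa in $Y_0$ together with enough structure so that $T(N_Y)$ is resolved while $T(N_{Y\cup\{x\}})$ collapses to a star; if $x$ is not the unique leaf below $h$, one must first choose $Y$ to include a sibling of $x$ under $h$ so that $N_{Y\cup\{x\}}$ still only sees the star. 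I would handle this by working in the reduced graph of $N(B)_{Y_0\cup\{x\}}$, whose internal blob is a $k$-cycle with $k\ge5$ by definition of $\mathfrak{C}_5$, and selecting $Y$ as a $4$-taxon subset of $Y_0$ whose induced tree of blobs on the $k$-cycle is resolved (possible since $k-1\ge4$) — then adding $x$ puts $x$ back below a boundary node and, by Lemma~\ref{lemma:Linkingnodes} applied to this $4$-or-$5$-taxon bloblet, makes the tree of blobs a star exactly because the node above $x$ is linking. For the converse, if such a $Y$ exists then $T(N_{Y\cup\{x\}})$ being a star forces $x$'s parent path to re-enter a blob $B'$ of $N$ when $x$ is adjoined, and $T(N_Y)$ being resolved shows that same blob was "split" before adding $x$; the only way adding one taxon re-fuses a resolved tree of blobs into a star is for $x$ to hang below a linking lowest node — i.e.\ a hybrid node — of that blob, and in an extended bloblet that blob is $B$ (or a $2$-blob extension thereof, irrelevant to hybrid-node membership).

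Once the combinatorial biconditional is proved, the identifiability conclusion A-hyb$(\mathfrak{C}_5^e)$ is immediate from A-ToB: the criterion "there exists a $4$-set $Y$ with $T(N_{Y\cup\{x\}})$ a star and $T(N_Y)$ resolved" refers only to trees of blobs of $4$- and $5$-taxon induced networks, each of which is identifiable under A-ToB (applied to the corresponding induced network, using that inducing commutes with semidirecting and that A-ToB is a statement about arbitrary semidirected networks). Iterating over all taxa $x$ recovers the full set of taxa below hybrid nodes of $B$.

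The main obstacle I anticipate is the forward direction when $x$ is \emph{not} the unique leaf below its hybrid node: a naive choice of $Y\subset Y_0$ will not collapse $T(N_{Y\cup\{x\}})$ to a star, since $x$ together with its siblings still witnesses the cut edge separating that hybrid's descendants. The fix is to choose $Y$ so that it contains all-but-$x$ of the leaves below that hybrid (or at least one sibling), making $x$ "absorbed" — but then one must check $Y$ can still be taken of size exactly $4$ and that $T(N_Y)$ remains resolved, which requires a short case analysis on how many taxa sit below the various hybrid nodes and on $Y_0$ having enough taxa; here Lemma~\ref{lem:Cknet}'s bound $m\ge k\ge 5$ guarantees the necessary room. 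A secondary subtlety is ensuring that "resolved" versus "star" is the right dichotomy for $4$- and $5$-taxon trees of blobs and that A-4circ is not secretly needed — I believe it is not, since we only ever ask whether the tree of blobs is the star, never about circular orders.
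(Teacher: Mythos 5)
Your proposal follows essentially the same route as the paper: the forward direction builds a $4$-subset $Y\subseteq Y_0$ inside the level-1, single-$k$-cycle network $\overline{N(B)_{Y_0\cup\{x\}}}$ guaranteed by Definition~\ref{def:classCk}, the converse applies Lemma~\ref{lemma:Linkingnodes} to the extended bloblet $N_{Y\cup\{x\}}$, and A-ToB then converts the combinatorial criterion into identifiability. Two points in your forward direction need tightening. First, Lemma~\ref{lemma:Linkingnodes} is invoked backwards: it presupposes that the larger network is a bloblet (star tree of blobs) and characterizes when deleting a taxon resolves it, so it cannot be used to conclude that $T(N_{Y\cup\{x\}})$ \emph{is} a star. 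That conclusion must be read off directly: choose one taxon of $Y_0$ attached to each of four distinct non-hybrid nodes of the $k$-cycle (possible since $k-1\ge 4$ and, in the reduced network, every such node carries at least one taxon of $Y_0$); then $\overline{N_{Y\cup\{x\}}}$ is a $5$-sunlet, so its tree of blobs is a star, while $T(N_Y)$, obtained by opening the cycle into a path with four distinct attachment points, is a resolved quartet. Your stated selection criterion --- merely that the induced tree of blobs of $N_Y$ be resolved --- is not sufficient on its own: if two taxa of $Y$ attach at the same cycle node, $T(N_Y)$ can still be resolved yet $T(N_{Y\cup\{x\}})$ contains a cherry and is not a star. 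Second, the case you single out as the main obstacle ($x$ not the unique leaf below its hybrid node) is vacuous in this setting: in a $\mathfrak{C}_5$ extended bloblet every hybrid node is bicombining with out-degree $1$, and its unique child cut edge leads (possibly through $2$-blobs) to a single leaf, so $L(h)=\{x\}$ always and no sibling ever needs to be added to $Y$.
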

\begin{proof} 
 Let $N$ be a network in $\mathfrak{C}_5^e$ with internal blob $B$,
 and taxon set $X=X_0\sqcup X_h$
 where $X_h$ is the subset of taxa with a hybrid ancestor in $B$.
 
Let $x\in X_h$. By Definition~\ref{def:classCk}, the
reduced induced
subnetwork $\overline{N_{X_0\cup\{x\}}}$
is a level-1 network with a single $k$-cycle with $k\geq5$.
Therefore there exists a
subset of 4 taxa $Y \subseteq X_0$ such that
the reduced graph $\overline{N_{Y\cup\{x\}}}$ is level-1 with a single $5$-cycle,
and $T(N_{Y\cup\{x\}})$ is a star tree.
 Since all taxa in $Y$ are on the root skeleton tree of $N$
and $\overline{N_{Y\cup\{x\}}}$ has a $5$-cycle, 
$T(N_{Y})$ is resolved.

Conversely let $Y$ be a 4-taxon set such that $T(N_{Y\cup\{x\}})$ is a star tree
but $T(N_{Y})$ is resolved.
Note that $N_{Y\cup\{x\}}$ must be an extended bloblet
with a non-trivial internal 5-blob $B$ for this to occur.
By Lemma \ref{lemma:Linkingnodes}, $x$ is below a linking lowest node of $B$,
which is hybrid, and $x\in X_h$.
\end{proof}

Another useful result implying identifiability of hybrids  is the following.

\begin{lemma}\label{lem:tobplus3}
Consider the class $\frak C_4^e$ of extended bloblets whose internal blob is $\frak C_4$.
If A-ToB and A-3blob hold, then so does A-hyb($\frak C_4^e$).
\end{lemma}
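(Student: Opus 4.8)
The plan is to mimic the structure of the proof of \Cref{lem:identifyXh}, but to compensate for the loss of the guaranteed $5$-cycle (which was what forced a resolved tree of blobs after deleting a single leaf $x$) by invoking A-3blob. Concretely, let $N$ be in $\frak C_4^e$ with internal blob $B$ and taxon set $X = X_0 \sqcup X_h$, where $X_h$ is the set of taxa with a hybrid ancestor in $B$. We want to characterize membership in $X_h$ using only identifiable quantities: the tree of blobs of induced subnetworks (available by A-ToB), and the trivial-vs-nontrivial status of internal $3$-blobs of $4$-taxon extended bloblets (available by A-3blob).

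First I would establish the ``if'' direction: suppose $x \in X_h$. Since $B$ is $\frak C_4$, the reduced graph $\overline{N_{X_0 \cup \{x\}}}$ is level-1 with a single $k$-cycle, $k \ge 4$, and $x$ sits below the hybrid node of that cycle. I would then select a $4$-taxon subset $Y \subseteq X_0$ so that $\overline{N_{Y \cup \{x\}}}$ is level-1 with a single $4$-cycle in which $x$'s parent is the hybrid node; such a $Y$ exists because the root skeleton tree has at least $3$ taxa (\Cref{lem:Cknet}) and one can pick three of them suitably placed around the cycle, adding $x$ as the fourth taxon. For this $Y$: $T(N_{Y \cup \{x\}})$ is a star tree (the $4$-cycle collapses to one blob), while $T(N_Y)$ is resolved (the four taxa of $Y$ all lie on the root skeleton tree, so deleting $x$ removes the hybrid node and its cycle, exposing an internal cut edge). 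Moreover I must verify the new witnessing condition: that $N_{Y \cup \{x\}}$ is a $4$-taxon extended bloblet whose internal $3$-blob — the blob of the tree of blobs $T(N_Y)$ incident to the three relevant taxa, now with $x$ reattached — is \emph{non-trivial}. This should follow because reattaching $x$ below the hybrid of the $4$-cycle genuinely creates a reticulation in what was an internal edge, so the corresponding $3$-blob of $N_{Y\cup\{x\}}$ is not a single node.

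For the ``only if'' direction, suppose there is a $4$-taxon set $Y$ with $T(N_Y)$ resolved, $T(N_{Y \cup \{x\}})$ a star tree, and the internal $3$-blob of the relevant $4$-taxon extended bloblet non-trivial. As in \Cref{lem:identifyXh}, $N_{Y \cup \{x\}}$ must be an extended bloblet whose internal blob $B'$ is non-trivial, and by \Cref{lemma:Linkingnodes} the leaf $x$ lies below a linking lowest node of $B'$, which is necessarily hybrid; chasing this hybrid node back up into $B$ (the internal blob induces from $B$) shows $x \in X_h$. The A-3blob hypothesis enters precisely to rule out the degenerate possibility, absent in the $\frak C_5$ case, that $T(N_{Y\cup\{x\}})$ being a star is caused by a trivial $3$-blob rather than by a genuine linking reticulation — i.e., it upgrades ``the tree of blobs collapsed'' to ``a real blob is present.''

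The main obstacle I anticipate is the ``if'' direction, specifically the existence of a $4$-taxon $Y \subseteq X_0$ realizing a clean $4$-cycle with $x$'s parent as hybrid, together with the non-triviality of the witnessing $3$-blob. One has to be careful about where the partner hybrid edges of $x$'s hybrid node attach in the root skeleton tree: the $\frak C_4$ condition forbids them attaching to the same edge, so after restricting to a well-chosen $3$-element subset of $X_0$ one still has a genuine $4$-cycle rather than a collapsed $3$-cycle, but making this choice explicit — and confirming it survives reduction — requires a short structural argument about the root skeleton tree (using \Cref{lem:1galled-skeleton} and \Cref{lem:C41galled}). The rest is bookkeeping paralleling the earlier lemma.
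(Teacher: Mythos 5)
Your strategy---port the witness from \Cref{lem:identifyXh} and patch it with A-3blob---breaks down at exactly the case that separates $\frak C_4$ from $\frak C_5$, namely when the tree cycle through $x$'s hybrid parent is a genuine $4$-cycle. That cycle has only three tree boundary nodes, so any four taxa drawn from $X_0$ must include two attaching through the same cycle node; the resulting $5$-taxon network $N_{Y\cup\{x\}}$ then has a cherry in its tree of blobs and is \emph{never} a star. If you instead shrink to $|Y|=3$ (as your phrase ``pick three of them \ldots adding $x$ as the fourth taxon'' suggests), then ``$T(N_Y)$ is resolved'' is vacuous, since every $3$-taxon tree of blobs is a star. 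Either way the proposed witness does not exist for such $x$, so the ``if'' direction fails and the test misclassifies hybrid taxa whose cycles have length exactly $4$ --- precisely the taxa the lemma must detect. A second, related problem is that A-3blob is invoked on the wrong object: it certifies trivial vs.\ non-trivial only for extended bloblets with an internal \emph{$3$-blob} (three incident cut edges), whereas a $4$-taxon extended bloblet containing a $4$-cycle has an internal $4$-blob. To bring A-3blob to bear you must first restrict to three taxa, one from each of three components hanging off a node of some tree of blobs. You correctly flagged the existence of the witness as ``the main obstacle,'' but the obstacle is not a bookkeeping issue about where partner hybrid edges attach; it is a counting obstruction that no choice of $Y$ can overcome.

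The paper's proof avoids per-taxon witnesses entirely. It identifies $X_0$ globally: for every subset $Y$ with $|Y|\ge 3$, compute $T(N_Y)$ by A-ToB, and at each internal node apply A-3blob to the $3$-taxon subnetworks obtained by choosing one taxon from each of three components at that node; discard $Y$ if any such test reports a non-trivial blob. The subtle point is that a surviving $Y$ may still contain a taxon $a\in X_h$ whose cycle collapsed to a suppressed degree-$2$ node in $T(N_Y)$; the $\frak C_4$ hypothesis then guarantees one can trade $a$ for at least two taxa of $X_0$ lying in distinct groups off that cycle, producing a strictly larger surviving set with one fewer taxon of $X_h$. Hence the \emph{maximal} surviving set is exactly $X_0$, and $X_h$ is its complement. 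If you want to salvage your line of attack, you would need a fundamentally different witness for $4$-cycles (e.g., built from triples and A-3blob rather than from star trees of blobs), not a refinement of the placement argument.
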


\begin{proof} 
Let $N$ be a $\frak C_4$ extended bloblet, with internal blob $B$.
We identify the descendants $X_h$ of hybrid nodes in $B$ by
identifying the complementary set of taxa $X_0 = X \smallsetminus X_h$.

Consider all subsets $Y$ of taxa on $N$ with $\vert Y \vert \ge 3$,
and the tree of blobs 
$T(N_Y)$
(identifiable by A-ToB) of the induced network $N_Y$.
For each internal node in each $T(N_Y)$, use A-3blob on all subtrees 
determined by picking 3 edges emanating from the node to determine and discard those subsets $Y \subseteq X$ 
with $N_Y$ containing a non-trivial $k$-blob, $k\ge 3$,
as such $Y$ must contain a taxon in $X_h$.

Some remaining subsets $Y$ may still contain a taxon $a$ in $X_h$ because the cycle
in $N$ formed by the hybrid edges above $a$ in $B$ and edges in an unreduced skeleton tree of $N$
has been collapsed to a degree-2 node, which was then suppressed in $T(N_Y)$.
However, for such sets arising from $\frak C_4$ networks we can remove $a$ and include at least 2 additional taxa
from $X_0$ which are in distinct
groups off of the cycle. This produces a larger set $Y'$ which has an additional degree-3 node in its tree of blobs, 
which  A-3blob identifies as trivial. Moreover, $Y'$ has one less taxon from $X_h$.

Thus taking the largest set $Y$ which produces a tree of blobs,
all of whose internal nodes arise from trivial blobs as tested by
A-3blob, gives precisely $X_0$.
\end{proof}

\medskip

We now prove our main combinatorial results on bloblet identifiability.

\begin{theorem}\label{thm:1galled-C4}
Consider the class $\mathfrak{C}_4^e$ of
extended bloblets whose internal blob is $\mathfrak{C}_4$,
and suppose A-ToB, A-4circ, A-4len, and  
A-hyb$(\mathfrak{C}_4^e)$ hold.
For a network $N^e$ in $\mathfrak{C}_4^e$, let $N$ be the reduced bloblet
that $N^e$ extends. 
Then the semidirected topology of $N$ and the
length of its internal tree edges are identifiable.
\end{theorem}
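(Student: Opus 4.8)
The plan is to reconstruct the reduced bloblet $N$ from its induced 4- and 5-taxon subnetworks, using the structural results on galled tree-child bloblets together with the four assumptions. First I would apply A-hyb$(\mathfrak{C}_4^e)$ to identify the set $X_h$ of taxa lying below the (bicombining, by \Cref{lem:CkHybridsBicombining}) hybrid nodes of $B$, hence also the complementary set $X_0$ of taxa on the root skeleton tree. By \Cref{lem:1galled-skeleton}, the unreduced root skeleton tree $\widetilde T$ is a phylogenetic tree whose leaf set is exactly $X_0$ (together with suppressed attachment points of hybrid edges), and every hybrid node in $B$ is bicombining and lies in a unique tree cycle. So the combinatorial task splits into (i) recovering the topology of the tree $T_0$ obtained by reducing $\widetilde T$ on $X_0$; (ii) for each hybrid taxon $x \in X_h$, locating where in $T_0$ the two partner hybrid edges above $x$ attach and in which direction; and (iii) recovering the lengths of the tree edges inside each tree cycle.

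For (i): the induced subnetwork $N_{X_0}$ is, up to reduction, the tree $T_0$ with possible 2-blobs created where a collapsed tree cycle left a degree-2 node — but since $B$ is $\mathfrak{C}_4$, partner hybrid edges attach to distinct edges of the root skeleton tree, so each such collapse happens on a single edge and contributes only a 2-blob, not new branching. Hence the tree of blobs $T(N_{X_0})$, identifiable by A-ToB, is exactly $T_0$; its topology is recovered by looking at all 4-taxon subsets $Y \subseteq X_0$ and reading off the resolved split via A-4circ. Internal edge lengths of $T_0$ come from the second clause of A-4len applied to each 4-taxon subset of $X_0$ (with $N_Y$ a 4-taxon tree possibly 2-blob-extended on its pendant edges), and assembling the resulting quartet edge lengths in the usual four-point manner reconstructs all tree-edge lengths of $T_0$ — except along edges of $T_0$ that are subdivided by a hybrid-edge attachment, whose sublengths we recover in step (iii).

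For (ii) and (iii): fix $x \in X_h$. The reduced graph $\overline{N_{X_0 \cup \{x\}}}$ is a level-1 network with a single tree cycle of size $\geq 4$ (Definition of $\mathfrak{C}_4$), and I can shrink $X_0$ to a 4-taxon subset $Y$ so that $\overline{N_{Y \cup \{x\}}}$ has a 4-cycle: choose $Y$ to contain one taxon descending from each of the (at most two) non-hybrid sides of the cycle's boundary nodes plus taxa straddling the two arcs between them, which is possible because $B$ is $\mathfrak C_4$ and all hybrid nodes are binary. Then A-4circ gives the circular order of $Y \cup \{x\}$ around this 4-cycle, and A-hyb$(\mathfrak C_4^e)$ (already invoked) tells us $x$ is the hybrid; together these pin down which two edges of $T_0$ the partner hybrid edges attach to and on which side, i.e. the attachment "interval" on $T_0$. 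Ranging over enough such $Y$ resolves the exact attachment edges of $T_0$ and their cyclic arrangement, so the topology of the full cycle — hence of $N$ — is determined. Finally, for each cycle, A-4len applied to the 4-taxon extended bloblet $N_{Y \cup \{x\}}$ with its now-known hybrid node gives the lengths of the tree edges in that 4-cycle; combined with the $T_0$ edge lengths from step (i), this yields all internal tree-edge lengths of $N$.

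The main obstacle I expect is step (ii): one must be careful that a single 4-taxon window $Y \cup \{x\}$ generally only localizes the hybrid attachments to within a pair of edges/arcs of $T_0$, and that reconciling the information across all admissible 4-subsets genuinely determines the exact attachment edges and the global cyclic order of attachments around the cycle. This requires knowing that the $\mathfrak C_4$ condition — partner hybrid edges on distinct, non-identical edges of the skeleton tree — forbids the degenerate configurations (attachments to the same edge, or to an edge so short that no separating quartet exists) that would otherwise leave the cycle's internal structure ambiguous, exactly the phenomenon that makes $\mathfrak C_3$ insufficient. Handling the interaction between several hybrid taxa whose cycles attach to overlapping portions of $T_0$, and verifying that the recovered per-cycle data glue consistently onto a single reduced $N$, is the delicate bookkeeping that the formal proof will need to carry out.
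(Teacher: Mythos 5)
Your overall architecture matches the paper's: identify $X_h$ via A-hyb$(\mathfrak{C}_4^e)$, recover the root skeleton tree $T_\rho$ via A-ToB on $X_0$, locate each hybrid's attachments by applying A-4circ to the level-1 networks $\overline{N_{\{x,a,b,c\}}}$ with $a,b,c\in X_0$, and use A-4len for lengths. However, there is a genuine gap in your step (ii), precisely at the point you flag as ``delicate bookkeeping.'' You assert that the $\mathfrak{C}_4$ condition ``forbids the degenerate configurations (attachments to the same edge).'' It does not: $\mathfrak{C}_4$ only forbids the two \emph{partner} hybrid edges above a \emph{single} hybrid node from attaching to the same edge of the skeleton tree (\Cref{lem:C41galled}). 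Hybrid edges above \emph{distinct} hybrid nodes may perfectly well attach to the same reduced edge $uv$ of $T_\rho$, and then the circular-order data from A-4circ only tells you, for each such $x$, that its attachment point lies somewhere on $uv$; it cannot order the several attachment points $a(x), a(x'),\dots$ along $uv$, because any 4-taxon induced network containing two hybrid taxa has two reticulations and is outside the scope of A-4circ. So your purely combinatorial reconstruction of ``the exact attachment edges and their cyclic arrangement'' does not determine the topology in this case.

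The paper closes this gap by using A-4len not merely for edge lengths but as a topological tool: for each $x$ attaching to $uv$, the cycle of $\overline{N_{X_0\cup\{x\}}}$ contains the segment $u\,a(x)$ (or $v\,a(x)$), whose length is identifiable by A-4len, and subtracting from the known length of $uv$ when needed yields the distance from $u$ to $a(x)$; these distances order the attachment points along $uv$. This is exactly why the theorem's hypotheses include A-4len even for the topology claim, and why \Cref{rmk:SimplerC5} notes that \emph{without} A-4len one only gets topological identifiability on the smaller class where no two hybrid edges attach to the same skeleton-tree edge (and otherwise only a finite candidate list). To repair your argument, replace the claim that $\mathfrak{C}_4$ rules out shared attachment edges with this metric disambiguation step. (A minor secondary point: in your step (i), $N_{X_0}$ contains no collapsed-cycle 2-blobs at all, since hybrid nodes whose descendants all lie in $X_h$ sit on no up-down path between taxa of $X_0$; the 2-blobs to worry about come only from $N^e$ being a 2-blob extension.)
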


\noindent
Combining with Lemma~\ref{lem:identifyXh} and
using $\mathfrak{C}_5 \subseteq \mathfrak{C}_4$, this implies  the following.

\begin{cor}\label{cor:1galled-C5}
  Consider the class $\mathfrak{C}_5^e$ of extended bloblets whose internal blob
  is $\mathfrak{C}_5$, and suppose A-ToB, A-4circ and A-4len hold.
  For a network $N^e$ in $\mathfrak{C}_5^e$, let $N$ be the bloblet that
  $N^e$ extends. Then the semidirected topology of $N$ and the
  length of its internal tree edges are identifiable.
\end{cor}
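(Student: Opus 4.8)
The plan is to derive the corollary directly from Theorem~\ref{thm:1galled-C4} by showing that, for the class $\mathfrak{C}_5^e$, the extra assumption A-hyb$(\mathfrak{C}_4^e)$ needed in the theorem is in fact a consequence of A-ToB alone. The containment $\mathfrak{C}_5 \subseteq \mathfrak{C}_4$ (stated after Lemma~\ref{lem:C41galled}) immediately gives $\mathfrak{C}_5^e \subseteq \mathfrak{C}_4^e$, so a network $N^e$ in $\mathfrak{C}_5^e$ is also in $\mathfrak{C}_4^e$, and the reduced bloblet $N$ it extends is a $\mathfrak{C}_4$ bloblet. Thus Theorem~\ref{thm:1galled-C4} will apply to $N^e$ as soon as its four hypotheses are in force.

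The key step is to supply A-hyb$(\mathfrak{C}_4^e)$ on the subclass $\mathfrak{C}_5^e$. Here I would invoke Lemma~\ref{lem:identifyXh}: it states precisely that, for the class $\mathfrak{C}_5^e$ of extended bloblets whose internal blob is $\mathfrak{C}_5$, A-ToB implies A-hyb$(\mathfrak{C}_5^e)$, with the explicit characterization that a taxon $x$ lies below a hybrid node of the internal blob $B$ if and only if there is a 4-taxon set $Y$ with $T(N_{Y\cup\{x\}})$ a star tree while $T(N_Y)$ is resolved. Since A-ToB is among the hypotheses of the corollary, Lemma~\ref{lem:identifyXh} gives us A-hyb$(\mathfrak{C}_5^e)$. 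Because every network in $\mathfrak{C}_5^e$ is also in $\mathfrak{C}_4^e$ and the identifiability statement A-hyb is about recovering the set of taxa below hybrid nodes of the internal blob --- a set that does not depend on whether we regard $N^e$ as living in $\mathfrak{C}_4^e$ or $\mathfrak{C}_5^e$ --- the validity of A-hyb$(\mathfrak{C}_5^e)$ is exactly what is needed to run Theorem~\ref{thm:1galled-C4} on any $N^e \in \mathfrak{C}_5^e$.

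With all four hypotheses of Theorem~\ref{thm:1galled-C4} verified for $N^e$, the theorem yields that the semidirected topology of the reduced bloblet $N$ that $N^e$ extends, together with the lengths of its internal tree edges, is identifiable; this is the assertion of the corollary. I do not anticipate a genuine obstacle here, since the corollary is essentially a bookkeeping combination of the preceding results --- the one point requiring a little care is making sure that the class on which each assumption is asserted lines up correctly (A-hyb is proved for $\mathfrak{C}_5^e$ but invoked inside a theorem phrased for $\mathfrak{C}_4^e$), which is harmless because $\mathfrak{C}_5^e \subseteq \mathfrak{C}_4^e$ and the recovered object is the same.
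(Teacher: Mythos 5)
Your proposal is correct and follows exactly the paper's route: the corollary is obtained from Theorem~\ref{thm:1galled-C4} by combining Lemma~\ref{lem:identifyXh} (A-ToB implies A-hyb$(\mathfrak{C}_5^e)$) with the containment $\mathfrak{C}_5 \subseteq \mathfrak{C}_4$, which is precisely what you do. Your remark about checking that the class on which A-hyb is asserted lines up with the class in the theorem is the only subtlety, and you resolve it the same way the paper implicitly does.
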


\begin{proof}[Proof of Theorem~\ref{thm:1galled-C4}]
For $N^e$ and $N$ as in the statement, let $B$ be
their common internal blob.
By assumption A-hyb($\mathfrak{C}_4^e$)
we can identify the partition $X=X_0\sqcup X_h$
where $X_h$ is the set of taxa that are below hybrid nodes of $B$.
$X_h$ may be empty, in which case $B$ is trivial and there is nothing to prove.

Otherwise, the taxa $X_0$ are those on the root skeleton tree $T_\rho$
of $N$, so by assumption A-ToB we can identify the topology of $(N^e)_{\!X_0}$'s
tree of blobs, which is $T_\rho=\overline{N_{X_0}}$.

Now take $x\in X_h$. As $N$ is  $\mathfrak{C}_4$, 
the reduced graph $\overline{N_{X_0\cup\{x\}}}$ is level-1, with a single cycle, of at least 4 edges.
By considering induced 4-taxon networks on $x,a,b,c$ for all choices of 3 taxa $a,b,c\in X_0$ and using A-4circ, 
we may determine those taxa which attach to the single cycle of
$\overline{N_{X_0\cup\{x\}}}$
by paths to each of the nodes in the cycle. 
This is enough to determine the edges (and possibly nodes)
of $T_\rho$ onto which the hybrid edges above $x$ attach.

From this information across all taxa in $X_h$, we can identify which edges in
$T_\rho$ arise from a path of multiple edges in the unreduced $N_{X_0}$,
and which edges in $T_\rho$ match a single edge in $N_{X_0}$.

\smallskip
Next we need to identify for each taxon $x$ in $X_h$,
the precise locations at which the two hybrid edges above $x$ originate on $T_\rho$,
and the length of those internal tree edges in $B$
which are internal tree edges of ${N}_{X_0}$. 

Let $uv$ be an edge in $T_\rho$. If $uv$ arises from a single internal edge in $N_{X_0}$
(no hybrid edge attaches to $uv$ except possibly at its ends), then $uv$
arises from an internal edge in $(N^e)_{\!X_0}$ and its edge length
is identifiable by A-4len.

If one or more hybrid edges attach to $uv$, 
let $p$ be the corresponding path in $N$ of tree edges and tree nodes
$u,a_1\ldots,a_{k},v$ ($k\geq 1$) that reduces to $uv$
in $T_\rho$. It is possible for $u$ or $v$ to be a leaf, but not both
(because $T_\rho$ has 3 or more taxa by \Cref{lem:Cknet}).
We thus  assume that $u$ is internal.
At this point, we have already identified the set of taxa $X_{uv} \subseteq X_h$
below a hybrid edge that attaches to $uv$. 
For $x\in X_{uv}$, let $a(x)$ be the associated attachment node.
It suffices to show that we can identify the distance between $u$ and $a(x)$
for each $x\in X_{uv}$.

To this end, fix $x\in X_{uv}$. In $\overline{N_{X_0\cup\{x\}}}$,
$u$ and $a(x)$ are internal nodes
(of degree $\geq 3$), and $v$ is either a leaf or internal.
The cycle in $\overline{N_{X_0\cup\{x\}}}$ contains 4 or more nodes, two of which are $a(x)$
and the hybrid node above $x$.

If $v$ is a leaf the cycle also includes $u$ since $N$  is in $\mathfrak{C}_4$. By A-4len, we can 
identify the length of tree edges in this cycle on $N^e$ which includes the length of $u a(x)$.
If $v$ is internal, we can similarly identify   the length of either $u a(x)$ or of $v a(x)$, whichever is 
part of the cycle. But since $uv$ is an internal edge of the tree $\overline{N_{X_0}}$,
using A-4len we can identify its length on $N^e$. Subtracting the length of $v a(x)$
from it gives the length of $u a(x)$.

At  last, note that since all hybrid nodes are binary,
 any taxa  $x$ and $x'$ sharing the same two attachment points $a_1,a_2$
 must have distinct pairs of hybrid edges above them:
$(a_1,h)$ and $(a_2,h)$ above $x$; 
$(a_1,h')$ and $(a_2,h')$ above $x'$, with $h\neq h'$, as opposed to  
$x$ and $x'$   descending from the same pair of hybrid edges
$(a_1,h)$ and $(a_2,h).$ 
\end{proof}

\begin{rmk}\label{rmk:relative}
In the proof of Theorem \ref{thm:1galled-C4}, edge lengths, identified by A-4len, 
are used to precisely locate the attachment points of
hybrid nodes on the root skeleton tree. In fact
identifying the \emph{relative} edge lengths is
sufficient to identify the semidirected topology of the blob. That is,  if 
A-4len is weakened  to identifying only whether whether $\ell(e)<\ell(e')$, $\ell(e)=\ell(e')$ or $\ell(e)>\ell(e')$
when $e$ and $e'$ are composite edges formed from a path and subpath of tree edges in the network, then a
weakened theorem can be established.

In particular, edge lengths may be considered
in any units (consistently across 4-taxon sets),
such as years, number of generations, coalescent units, or substitutions per sites.
If only an ordering by magnitude of edge lengths can be identified,
then edge lengths in the full network are not identifiable, but the claim of
topological identifiability stated in Theorem \ref{thm:main} remains valid.
This suggests robustness of network topology estimation to edge length inference error.
\end{rmk}

\medskip

We extend the two previous results from bloblets to general networks.

\begin{theorem}\label{thm:galled-C4}
Suppose A-ToB, A-4circ, A-4len, and A-hyb($\mathfrak{C}_4$) hold.
For a semidirected network $N$ in $\mathfrak{C}_4$, the topology of $N$ and the lengths of internal tree edges
in its blobs are identifiable.
\end{theorem}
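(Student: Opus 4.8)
The plan is to reduce the general network case to the bloblet case already handled in Theorem~\ref{thm:1galled-C4}, using the tree of blobs as scaffolding. First I would invoke A-ToB to identify $T(N)$, the tree of blobs of $N$. Each internal node of $T(N)$ corresponds to a blob $B$ of $N$, and the star-tree neighborhood of that node records the number of cut edges incident to $B$, i.e.\ enough information to pin down the topology of the bloblet $N(B)$ generated by $B$ up to the data still to be recovered. Since $N\in\mathfrak{C}_4$, each such $B$ is $\mathfrak{C}_4$ by definition, so $N(B)$ is a $\mathfrak{C}_4$ bloblet; I then want to recover its topology and internal tree-edge lengths one blob at a time.

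The key step is to show that the induced-network data needed to run the bloblet argument on a single blob $B$ is in fact available from the full network $N$. Concretely: let $Z\subseteq X$ be a set containing, for each cut edge incident to $B$, exactly one taxon ``on the far side'' of that cut edge (one representative per pendant direction of the bloblet $N(B)$), together with the descendant taxa of each lowest node of $B$. Then the reduced induced network $\overline{N_Z}$ is exactly an extended bloblet whose internal blob is $B$ (the other blobs of $N$ lying off of $B$ get contracted to the chosen representative leaves, possibly leaving 2-blobs on the pendant edges, which is precisely what ``extended bloblet'' allows). So $\overline{N_Z}\in\mathfrak{C}_4^e$, and A-ToB, A-4circ, A-4len, A-hyb($\mathfrak{C}_4$) all apply to induced subnetworks of $N$ since they are hypotheses about arbitrary semidirected networks (and $\mathfrak{C}_4$ is closed under the relevant induced-subnetwork operations). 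Applying Theorem~\ref{thm:1galled-C4} to $\overline{N_Z}$ recovers the semidirected topology of $N(B)$ and the lengths of its internal tree edges. Note A-hyb($\mathfrak{C}_4$) is stated for general $\mathfrak{C}_4$ internal blobs, which is what we have here, so no appeal to Lemma~\ref{lem:tobplus3} is needed; alternatively one could derive A-hyb($\mathfrak{C}_4^e$) from A-hyb($\mathfrak{C}_4$) directly.

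Having recovered every blob's internal structure and internal edge lengths, I would reassemble $N$: the tree of blobs $T(N)$ tells us how the blobs are glued along cut edges, and for each blob $B$ the bloblet $N(B)$ tells us which boundary node of $B$ each incident cut edge attaches to. Matching boundary nodes of adjacent blobs via the shared cut edge (using the consistent leaf-set bookkeeping, since each cut edge of $T(N)$ separates $X$ into two known taxon subsets) determines $N$'s semidirected topology uniquely, and the internal tree-edge lengths inside all blobs have been identified. The lengths of cut edges themselves are not claimed.

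The main obstacle I expect is the gluing/bookkeeping step: verifying that the induced network $\overline{N_Z}$ really is a $\mathfrak{C}_4$ \emph{extended} bloblet with internal blob $B$ (no blob off of $B$ survives as a non-2-blob after restriction to $Z$, and no spurious resolution is introduced), and that when we reattach blobs along a cut edge the boundary node on each side is unambiguously determined. This requires care because a cut edge of $N$ incident to $B$ may, in $N(B)$, attach $B$ to a pendant 2-blob rather than directly to a leaf; one must check that distinguishing the correct boundary node of $B$ only depends on data already identified (it does, since the attachment node is part of the recovered topology of $N(B)$). A secondary subtlety is choosing the representative set $Z$ so that the relevant 4-taxon subsets needed by A-4circ and A-4len inside $B$ are all present — this is handled by taking $Z$ large enough, e.g.\ all of $X$, since inducing on all of $X$ is the identity and one simply argues blob-by-blob using the induced subnetworks on the taxon subsets naturally associated to each blob.
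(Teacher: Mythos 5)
Your proposal is correct and follows essentially the same route as the paper: identify the tree of blobs via A-ToB, pass to the reduced induced network on one representative taxon per cut edge incident to each blob $B$ (yielding an extended bloblet with internal blob $B$), apply Theorem~\ref{thm:1galled-C4} blob by blob, and reassemble along the tree of blobs. The extra bookkeeping you flag (verifying the induced network is genuinely an extended bloblet and that gluing is unambiguous) is left implicit in the paper's short proof, so your elaboration is consistent with, not divergent from, its argument.
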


\begin{proof}
By A-ToB, the tree of blobs of $N$ is identifiable. To identify the structure of an individual blob $B$, we pass to
an extended bloblet as follows:
For each cut edge of $N$ incident to $B$ choose one taxon which is separated from $B$ by that edge, 
forming a taxon subset $Y$ Then consider $\overline{N_Y}$, an extended bloblet with blob $B$.
By Theorem~\ref{thm:1galled-C4} the topology of $B$ and its internal tree edge lengths are identified.
This gives the full semidirected network and the lengths of all tree edges inside  blobs, as claimed.
\end{proof}

A similar argument, using the weaker hypotheses of \Cref{cor:1galled-C5}, 
yields the following.

\begin{theorem}\label{thm:galled-C5}
  Suppose A-ToB,  A-4circ, and A-4len hold.
  For a semidirected network in $\mathfrak{C}_5$,
  the topology 
  and the lengths of internal tree edges
  in non-trivial blobs are identifiable.
\end{theorem}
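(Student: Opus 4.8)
The plan is to mirror the proof of \Cref{thm:galled-C4} almost verbatim, substituting \Cref{cor:1galled-C5} for \Cref{thm:1galled-C4}, and tracking carefully what happens for trivial blobs. By A-ToB, the tree of blobs $T(N)$ is identifiable, so it suffices to recover the topology of each non-trivial blob $B$ of $N$ together with the lengths of the tree edges internal to $B$. Fix such a non-trivial blob $B$. For each cut edge of $N$ incident to a boundary node of $B$, choose one taxon separated from $B$ by that edge; let $Y$ be the resulting taxon subset. Then $\overline{N_Y}$ is an extended bloblet whose internal blob is $B$, and since $N$ is $\mathfrak C_5$ and every blob of a $\mathfrak C_5$ network is $\mathfrak C_5$, the blob $B$ is $\mathfrak C_5$; hence $\overline{N_Y}$ lies in $\mathfrak C_5^e$. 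Each induced network $N_Y$ is identifiable by A-ToB (applied to $N$ restricted to $Y$, or directly since inducing commutes with passing to the tree of blobs), so we may apply \Cref{cor:1galled-C5} to conclude that the semidirected topology of the reduced bloblet that $\overline{N_Y}$ extends --- which realizes $B$ together with the cut edges incident to it --- is identifiable, along with the lengths of its internal tree edges.

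Assembling this information across all non-trivial blobs reconstructs the full semidirected topology of $N$: the tree of blobs tells us how the blobs are glued along cut edges, and each blob's internal structure has just been identified; the trivial blobs contribute nothing beyond what $T(N)$ already encodes. Likewise, the lengths of the internal tree edges of every non-trivial blob are identified by the edge-length clause of \Cref{cor:1galled-C5}. This gives exactly the conclusion of the theorem: topology and internal tree-edge lengths in non-trivial blobs. Note we deliberately do \emph{not} claim identifiability of cut-edge lengths or of tree-edge lengths in trivial parts of the network --- this is consistent with the weaker hypotheses (no A-hyb assumption), and is why the statement restricts the edge-length conclusion to non-trivial blobs.

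The one point requiring a little care --- and the main (mild) obstacle --- is checking that $\overline{N_Y}$ really is an \emph{extended} bloblet with internal blob $B$ and that its reduced bloblet has $B$ as its unique $m$-blob with $m\ge 3$, so that \Cref{cor:1galled-C5} applies cleanly. Because we pick exactly one taxon per cut edge incident to $B$, every cut edge of $N$ touching $B$ survives in $N_Y$ (as each separates two taxa of $Y$), while the rest of $N$ outside $B$ collapses to pendant edges and possibly 2-blobs; thus $\overline{N_Y}$ has star tree of blobs with center $B$, i.e.\ it is an extended bloblet (or a bloblet) in the sense of the definitions in \Cref{sec:skeletons-blobs}. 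One should also confirm that the number of cut edges incident to $B$ in $N_Y$ matches that in $N$, so that the bloblet generated by $B$ is unchanged; this is immediate from the one-taxon-per-cut-edge choice. Everything else is a direct citation of \Cref{cor:1galled-C5} and A-ToB, so beyond this bookkeeping the argument is essentially the same as that of \Cref{thm:galled-C4}.
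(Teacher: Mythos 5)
Your proof is correct and follows exactly the route the paper intends: the paper proves \Cref{thm:galled-C5} by the remark ``a similar argument, using the weaker hypotheses of \Cref{cor:1galled-C5},'' i.e., by repeating the proof of \Cref{thm:galled-C4} with \Cref{cor:1galled-C5} in place of \Cref{thm:1galled-C4}, which is precisely your argument (with the one-taxon-per-cut-edge reduction to an extended bloblet and the bookkeeping on cut edges spelled out in more detail than the paper bothers to).
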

 
Note that the theorems do not claim identifiability of the lengths of
cut edges of $N$.   However, A-4len can be applied to any cut edge $uv$ whose endpoints are not incident to a blob, 
by choosing four taxa that define $uv$. 

\begin{rmk}\label{rmk:edgeNonID}
Identifying lengths of hybrid edges and tree edges incident to non-trivial blobs
seems to depend on more than the general assumptions made above. 
Even in the level-1 case studied in \cite{2024ABGarrotelopesR}, lengths of edges near 3-cycles 
and leaves can be nonidentifiable.
\end{rmk}

\begin{rmk}
Arguing as in the proof of \Cref{thm:galled-C4} or \ref{thm:galled-C5}, the topology 
of $\mathfrak{C}_4$ or $\mathfrak{C}_5$ blobs of a general network $N$ may be identifiable,
even if the full network $N$ is not of those classes.
The topologies of its $\mathfrak{C}_4$ blobs are identifiable
if A-ToB, A-4circ, A-4len, and A-hyb($\mathfrak{C}_4$) hold, and topologies of
$\mathfrak{C}_5$ blobs if only the first three assumptions hold.
The network $N_0$ of \Cref{fig:intro}, discussed in the introduction, 
provides an example of this since only one blob is $\frak C_5$.
\end{rmk}

\begin{rmk}\label{rmk:SimplerC5}
Without assuming A-4len, the proof of \Cref{thm:1galled-C4}
shows that, for each hybrid in a blob, we can identify the 2 edges
onto which its parent edges attach. The proof
could thus be modified to show topological identifiability of $N$ for a smaller class
of networks where no two hybrid edges attach to the same edge of the root skeleton tree.
For example, the rightmost blob of $N_0$ in \Cref{fig:intro} is $\frak C_5$ and in this
more restrictive class.

Even if multiple hybrid edges with different children attach to the same edge
in the root skeleton tree,
without A-4len we can still determine a finite list of networks, one of which
is the true network.
\end{rmk}

\section{Models and quartet concordance factor data}
\label{sec:moddat}

We now present several models  of gene tree formation on a network, as well as the quartet concordance factor data type.
Then we establish that the assumptions needed to apply
Theorems~\ref{thm:galled-C4} and~\ref{thm:galled-C5} hold,
and conclude with our main result in \Cref{thm:main}.

\subsection{Models for gene trees}

We describe three models of gene trees forming within species networks.
\begin{defn}
 \label{def:gene-tree-models}
 Let $N^+$ be a rooted metric phylogenetic network with edge lengths
 $\ell(e)$ in coalescent units (generations/population size). 
 Then the following models,
 determine distributions of topological or metric unrooted gene trees.
 	\begin{enumerate}
		\item \emph{Displayed tree} (DT) model:
		For each edge $e$ of $N^+$, we also specify
		the effective population size $\popsize(e)>0$ and
		mutation rate $\mu(e)>0$ in substitutions per site per generation.

		Only gene trees whose topology is displayed in $N^+$ have positive probability,
		equal to the product of inheritance probabilities of all edges in $N^+$ forming $T$:
		$$\P{T} = \gamma(T) = \prod_{e\in T}\gamma(e).$$
		Each edge $e$ of $T$  is assigned
		length $s(e)=\ell(e)\eta(e)\mu(e)$, giving a distribution of metric unrooted gene trees.
		
		\smallskip
		
		\item \emph{Network multispecies coalescent model with
			independent inheritance} (NMSCind): Gene trees form according to the coalescent
		model within each population (edge) of the network. At a hybrid node with parental edges
		$e_1,\dots,e_m$, 
		each lineage is inherited from population $e_k$ ($1\leq k\leq m$) with
		probability $\gamma(e_k)$, 
		independently of the other lineages \cite{Luay2012}.
		This gives a distribution of topological unrooted gene trees.
		
		\smallskip
		
		\item \emph{Network multispecies coalescent model with
			common inheritance} (NMSCcom): Gene trees form according to the coalescent
		model within each population. At a hybrid node with parental edges
		$e_1,\dots,e_m$, 
		all lineages of a given gene are inherited from the same population $e_k$
		($1\leq k\leq m$), chosen with probability $\gamma(e_k)$. Equivalently, a 
		displayed tree is chosen with probability as in item 1, and
		a gene tree forms within it according to the coalescent process as in item 2 restricted to a tree
		\cite{2011GerardGibbsKubatko}.
		This gives a distribution of topological unrooted gene trees. 
	\end{enumerate}
\end{defn}

The NMSCind and NMSCcom models (items 2 and 3) are the two extreme cases of a model
with correlated inheritance of lineages at reticulations
\cite{2023fogg_phylocoalsimulations}, which for simplicity we do not consider in full
generality here.

We need not explicitly consider models of sequence evolution on gene trees, as we treat the gene trees themselves as data. 
In practice, one of course needs to assume these gene trees can be robustly inferred from sequences, as all ``2-stage'' inference 
methods utilizing inferred gene trees must do.

\subsection{Quartet concordance factors}

A number of network identifiability results and practical network inference methods are based on quartet \emph{concordance factors (CFs)}
\cite{SolisLemus2016, Banos2019, ABR2019, ABMR2023, 2024ABGarrotelopesR, ABRW2025}.
Although many of these results are limited to level-1 networks,
we draw on them to obtain results for the more general classes of networks in this work.

For a network with $n$ taxa, quartet CFs are the $3 \cdot \binom{n}{4} $ probabilities of the
unrooted gene quartet topologies that might relate a subset of four taxa.
These probabilities can be calculated under any model $M$ of gene tree formation on a network $N$.
If $M$ generates metric gene trees on the full set of $n$ taxa, 
CFs are obtained by pruning taxa except those in a quartet, and then
marginalizing over edge lengths and root location.
For the NMSCind and NMSCcom models 
only resolved quartet trees have positive probability, so CFs for each 4-taxon set have the form
\begin{equation*}
	\cf_{abcd} = (\cf_{ab|cd}, \cf_{ac|bd}, \cf_{ad|bc}) = 
	(\mathbb{P}_M({ab|cd}), \mathbb{P}_M({ac|bd}), \mathbb{P}_M({ad|bc})).
\end{equation*}
For the DT model on a binary network we also have probability 0 of an unresolved quartet topology. 

\subsection{Identifiability assumptions}
\label{sec:moddat-idAs}

We next investigate the validity of the assumptions laid out in Section \ref{sec:id} in the context of  
the three models and quartet CFs just introduced.  This requires additional
assumptions --- including that networks are binary and
numerical parameters are generic (lie outside some subset of measure zero) ---
to utilize already published identifiability results.
These restrictions do not alter the arguments of \Cref{sec:id}, and the main topological
identifiability results there still apply.

\subsubsection{A-ToB}

The identifiability of the tree of blobs of a binary network from quartet CFs
under the NMSCind model was proved in \cite{ABMR2023},
with \cite{AllmanEtAl2024} providing a practical inference algorithm.
\cite[Corollary 6.6]{2025Rhodes-circular} extended that work to the DT and NMSCcom models,
but made additional assumptions of no ``anomalous quartets'' on the network
in order to study circular orders of blobs.

In \cite{ABMR2023}, the proof of tree of blobs identifiability
is primarily combinatorial, with the key exception the fundamental case
for 4-taxon networks \cite[Theorem 1]{ABMR2023}.
Noting that the proofs of those results did not require
assuming no anomalous quartets, straightforward
modifications extend those arguments to the DT and NMSCcom models.
We state this formally as the next proposition.

\begin{prop}\label{prop:AToB}
Under the DT, NMSCind, and NMSCcom models with quartet CFs from a binary network
with generic numerical parameters, the network's tree of blobs is identifiable,
so assumption A-ToB holds.
\end{prop}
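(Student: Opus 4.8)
The plan is to reduce the general claim to the fundamental four-taxon case, exactly as in the combinatorial argument of \cite{ABMR2023}, and then verify that the four-taxon base case goes through for the DT and NMSCcom models as well as NMSCind. First I would recall the structure of the tree-of-blobs identifiability proof in \cite{ABMR2023}: for a binary $n$-taxon network $N$, the tree of blobs $T(N)$ is a tree whose internal edges correspond to nontrivial splits of $X$, and each such split is witnessed by a 4-taxon subset $Y$ for which the induced quartet network $N_Y$ has a resolved (cut) internal edge inducing the restricted split. The key combinatorial fact is that a split $A|B$ of $X$ is an edge of $T(N)$ if and only if for \emph{every} choice of four taxa $a_1,a_2\in A$, $b_1,b_2\in B$, the induced 4-taxon network $N_{\{a_1,a_2,b_1,b_2\}}$ displays the quartet split $a_1a_2|b_1b_2$ as a cut edge, i.e.\ its tree of blobs is the resolved quartet tree $a_1a_2|b_1b_2$; whereas if $a_1,a_2,b_1,b_2$ straddle a nontrivial blob of $N$, the induced 4-taxon tree of blobs is a star. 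Hence, \emph{granting} that the tree of blobs of every induced 4-taxon network is identifiable from its quartet CF, the tree of blobs of $N$ is reconstructed by testing, for each 4-subset, whether its CF corresponds to a star or to a resolved quartet (and which one), and then taking the set of splits that are ``voted for'' consistently.

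Next I would isolate the genuinely model-dependent input: the 4-taxon base case, which is \cite[Theorem 1]{ABMR2023}. That theorem states that for a binary 4-taxon network with generic parameters under NMSCind, the CF triple $(\cf_{ab|cd},\cf_{ac|bd},\cf_{ad|bc})$ determines whether the tree of blobs is a star (all three values equal, the ``fully symmetric'' case) or a resolved quartet tree, and in the latter case which resolution. The observation to make — and this is the content flagged in the paragraph preceding the proposition — is that the proof of \cite[Theorem 1]{ABMR2023} never invokes the absence of anomalous quartets; it only uses the algebraic/analytic form of the CF map near the relevant parameter regimes. So I would check that the CF formulas under the DT and NMSCcom models have the same qualitative features needed: (i) a 4-taxon network with a nontrivial blob (star tree of blobs) yields $\cf_{ab|cd}=\cf_{ac|bd}=\cf_{ad|bc}$, and (ii) a 4-taxon network with a cut internal edge yields a strict inequality singling out the displayed split, for generic parameters. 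Point (i) follows from the symmetry of the network under the relevant automorphism of a blob with 4 pendant edges (this is purely structural and model-agnostic, since all three models are defined functorially on the network), and point (ii) follows from the explicit CF computations for caterpillar/cycle quartets under the coalescent or displayed-tree weighting, with the generic-parameter exclusion removing the measure-zero locus where a would-be-strict inequality degenerates.

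With (i) and (ii) in hand for all three models, the base case holds, and then the purely combinatorial reconstruction of $T(N)$ from the collection of 4-taxon tree-of-blobs types carries over verbatim from \cite{ABMR2023}; I would simply cite that combinatorial portion rather than reprove it. I expect the main obstacle to be point (ii): one must confirm that for the DT and NMSCcom models the CF map on a 4-taxon network with a resolved internal edge is not identically symmetric and, more precisely, that the locus of parameters producing a ``false star'' (all three CFs equal despite a cut edge) has measure zero — this is where the genericity hypothesis does its work, and where one must be slightly careful because the DT model on a non-binary network can assign positive probability to unresolved quartets, so one restricts to binary networks as stated. The NMSCcom case is the most delicate of the three since its CF formula is a mixture over displayed trees of coalescent CFs, but the needed non-degeneracy is inherited from the NMSCind/coalescent computation on each displayed tree together with the fact that the mixture weights are generic. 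Once these verifications are recorded — essentially by pointing to the existing formulas and noting the arguments of \cite[Theorem 1]{ABMR2023} apply mutatis mutandis — the proposition follows.
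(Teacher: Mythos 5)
Your overall architecture matches the paper's: the paper also reduces to the 4-taxon base case of \cite[Theorem 1]{ABMR2023}, keeps the combinatorial reconstruction of $T(N)$ from quartet-level tree-of-blobs calls verbatim, and argues that the base case extends to DT and NMSCcom because the original proof never used the absence of anomalous quartets. The problem is that your formulation of that base case --- the one model-dependent step you actually try to supply --- is wrong in both directions, and following it would make the proof fail.

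First, your claim (i), that a 4-taxon network with a nontrivial blob yields $\cf_{ab|cd}=\cf_{ac|bd}=\cf_{ad|bc}$, is false. A 4-cycle network gives CFs of the form $(p,0,q)$ with $p,q>0$ under DT and $(p,r,q)$ with $p,q>r>0$, $p\ne q$ under the coalescent models (this is stated explicitly in the paper's discussion of A-4circ); generically no two of the three CFs coincide. The ``automorphism of a blob with 4 pendant edges'' you invoke does not exist for a general 4-blob --- a 4-cycle is not symmetric under permutations of its taxa --- so the structural argument you offer for (i) is vacuous. Second, your claim (ii), that a cut edge yields ``a strict inequality singling out the displayed split,'' is exactly the criterion that anomalous quartets violate: on a 4-taxon network with a cut edge $ab|cd$ and nontrivial blobs on its sides, $\cf_{ab|cd}$ need not be maximal. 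The correct detection criterion from \cite[Theorem 1]{ABMR2023} is the \emph{equality} $\cf_{ac|bd}=\cf_{ad|bc}$ of the two CFs for the non-displayed splits when $ab|cd$ is a cut, versus the generic failure of any such symmetry for a nontrivial 4-blob. That equality-based test is precisely why no anomalous-quartet hypothesis is needed; your inequality-based test would reintroduce the very assumption of \cite[Corollary 6.6]{2025Rhodes-circular} that this proposition is written to avoid. To repair the argument, replace (i) and (ii) with the correct dichotomy (resolved $\Leftrightarrow$ the symmetry pattern $(p,q,q)$, respectively $(1,0,0)$ under DT; 4-blob $\Leftrightarrow$ generically no such pattern) and verify it from the CF formulas for each of the three models.
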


\subsection{A-4circ}

The identifiability of circular orders for outer-labelled planar networks
with blobs of any size was studied in \cite{2025Rhodes-circular},
but the A-4circ assumption only concerns level-1 4-taxon networks,
where the result follows immediately from \cite[Theorem 4]{Banos2019}
for the NMSCind model.
For the other models on a level-1 4-taxon network $N$ with generic parameters,
one can directly compute that if $N$ displays the split $ab|cd$,
then $\cf_{abcd}$ has the form $(1,0,0)$ under DT,
and $(p,q,q)$ under NMSCcom and NMSCind,
while if $N$ has a 4-cycle with circular order $(a,b,c,d)$ the
CFs have the form $(p,0,q)$, $p,q>0$ for DT and $(p,r,q)$ with
$p,q>r>0$, $p\ne q$ under NMSCcom and NMSCind. 
If the 4-taxon network has a 4-polytomy the CF is
$(1/3,1/3,1/3)$ for the NMSC models, while for DT  gene trees are unresolved with probability 1. 
Thus we obtain the following.

\begin{prop}\label{prop:A4circ}
Under the DT, NMSCcom, and NMSCind models with generic parameters using
quartet CFs, assumption A-4circ holds.
\end{prop}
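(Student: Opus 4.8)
The plan is to enumerate the topological possibilities for a 4-taxon level-1 network and to exhibit, for each of the three models, the exact shape of the quartet CF vector $\cf_{abcd}$ in each case, so that this shape both reveals which case occurs and, in the two non-degenerate cases, names the split or the circular order. Up to relabelling, a 4-taxon level-1 network is one of: (i) a network displaying exactly one nontrivial quartet split $ab|cd$ --- equivalently, one whose tree of blobs is resolved, so that its nontrivial blobs are all 2- or 3-blobs, which on only four taxa cannot change the displayed quartet topology; (ii) a network whose single nontrivial blob is a 4-cycle with some circular order $(a,b,c,d)$; or (iii) the 4-polytomy. First I would dispose of the NMSCind model, where the needed statement is precisely \cite[Theorem 4]{Banos2019}: case (i) gives $\cf_{abcd}=(p,q,q)$ with $p>q>0$, case (ii) gives $\cf_{abcd}=(p,r,q)$ with $p,q>r>0$ (and $p\ne q$ for generic parameters) with the strictly smallest coordinate $r$ in the ``crossing'' slot, and case (iii) gives $(1/3,1/3,1/3)$.

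Next I would treat the DT and NMSCcom models by direct computation from \Cref{def:gene-tree-models}. Under DT the positive-probability gene quartets are exactly those displayed by cutting one hybrid edge per reticulation: in case (i) this yields only $ab|cd$, so $\cf_{abcd}=(1,0,0)$; in case (ii) cutting the two hybrid edges of the 4-cycle yields the two ``non-crossing'' quartets relative to $(a,b,c,d)$, with probabilities $\gamma$ and $1-\gamma$, so $\cf_{abcd}=(p,0,q)$ with $p,q>0$ and the single zero in the crossing coordinate; in case (iii) every gene tree is unresolved, so $\cf_{abcd}=(0,0,0)$. Under NMSCcom the gene quartet distribution is the $\gamma$-mixture over displayed trees of the within-tree coalescent distribution; each resolved displayed 4-taxon tree contributes, up to a coordinate permutation, a vector $(p_i,q_i,q_i)$ with $p_i>q_i>0$. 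Hence case (i) gives $\cf_{abcd}=(p,q,q)$ with $p>q>0$; case (ii) gives a mixture of a $(p_1,q_1,q_1)$-type vector (from tree $ab|cd$) and a $(q_2,q_2,p_2)$-type vector (from tree $ad|bc$), and a short computation shows the middle (crossing) coordinate is strictly smallest, so $\cf_{abcd}=(p,r,q)$ with $p,q>r>0$ and $p\ne q$ for generic parameters; and case (iii) gives $(1/3,1/3,1/3)$.

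Finally I would assemble these observations: in each model the three cases produce CF vectors with pairwise-distinct ``coordinate patterns'' --- for DT, two zeros versus one zero versus three zeros; for the NMSC models, one repeated coordinate that is the largest versus a unique strictly smallest coordinate versus all coordinates equal --- so the case is identifiable; and in case (i) the position of the repeated pair names the split $ab|cd$, while in case (ii) the position of the extremal (zero, respectively strictly smallest) coordinate names the crossing pair, hence the single congruent circular order. Thus the set of circular orders congruent with $N$ is identifiable and A-4circ holds. The one genuinely delicate point, and the only place genericity is used, is excluding boundary degeneracies that would collapse these patterns: a length-zero hybrid edge in the 4-cycle, or an unlucky combination of $\gamma$ with the cycle's tree-edge lengths, could drive the crossing coordinate to $0$ (so that case (ii) masquerades as case (i)) or make the two non-crossing coordinates equal; the explicit 4-cycle CF formulas show this bad locus has measure zero, so generic parameters avoid it, which is exactly the obstacle the genericity hypothesis is there to handle.
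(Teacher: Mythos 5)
Your proposal is correct and follows essentially the same route as the paper: cite \cite[Theorem 4]{Banos2019} for NMSCind, then directly compute the CF patterns $(1,0,0)$ vs.\ $(p,0,q)$ under DT and $(p,q,q)$ vs.\ $(p,r,q)$ with $p,q>r$, $p\ne q$ under the NMSC models, with the polytomy case handled separately. One small caution: your claim that case (i) always gives $p>q>0$ (and your classifier ``repeated coordinate that is the largest'') is not safe under NMSCind, since a 4-taxon level-1 network with a $3_2$-cycle can have an anomalous quartet with the displayed split's CF below $1/3$; the paper avoids this by relying only on the \emph{pattern} of equalities --- exactly one repeated pair (position of the odd coordinate names the split) versus three generically distinct values (position of the strictly smallest names the crossing pair) versus all equal --- which is the robust criterion and the one you should state in the final assembly.
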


\subsection{A-4len}

Let $N$ be a 4-taxon level-1 network with a 4-cycle with circular order $a,b,c,d$
whose hybrid node is ancestral to taxon $a$.
First consider the DT model.
As CFs capture only topological information on gene trees,
$\cf_{abcd}=(p,0,q)$ is independent of all branch lengths.
Therefore A-4len does \emph{not} hold from CFs for this model.
However, from one metric unrooted gene tree of each topology,
one can identify the lengths (in substitutions per site)
of the tree edges in the cycle, as these are simply internal branch lengths
on one of the unrooted trees displayed by $N$.
Similarly if $N$ is a tree, possibly extended with 2-cycles on its pendant edges,
the CFs alone are not enough to determine the internal edge length under DT,
but one metric unrooted gene tree is.
Note that under the DT model, A-4len is then satisfied using edge lengths
in substitutions per site: the units that can be identified on gene trees
from (arbitrary length) sequence data.

For the NMSCind model, the identifiability of the tree edge lengths in a 4-cycle
from CFs is dependent on having multiple samples
(either from different taxa or within the same taxon) from specific cut edges
attached to the 4-cycle, as characterized by \cite[Proposition 29]{2024ABGarrotelopesR}.
Specifically, under NMSCind, if the circular order of taxon groups around the cycle is $(A,B,C,D)$,
where each group is the subset of taxa separated from the blob by a common boundary node,
and if $A$ is below the hybrid node,
one needs either two samples from $B$ or two from $D$, 
or 2 from both $A$ and $C$.  For NMSCcom, two samples from $A$ is sufficient.
(See Appendix \ref{sec:appendix} for details.)
Rather than use these facts with such exactness, we simply say that these lengths
are identifiable if we have 2 samples per taxon group.

\begin{prop}\label{prop:A4len}
Using quartet CFs, assumption A-4len holds
\begin{enumerate}
\item under the DT model if one has a metric gene tree of each possible topology from the 4-taxon network, and 
\item under the NMSCcom and NMSCind models if there are 2-samples per taxon.
\end{enumerate}
\end{prop}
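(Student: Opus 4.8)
The plan is to establish the two cases of Proposition~\ref{prop:A4len} separately, since they rest on quite different underlying facts. For the DT model, the point is that assumption A-4len must be interpreted with edge lengths measured in substitutions per site, as explained in the preceding discussion, rather than in coalescent units. First I would recall that under DT only the topologies displayed by $N$ have positive probability, and these displayed trees are ordinary 4-taxon unrooted metric gene trees. If $N$ has a 4-cycle with circular order $(a,b,c,d)$ and hybrid node ancestral to $a$, then $N$ displays exactly two resolved quartet topologies, say $ab|cd$ and $ad|bc$, each arising by deleting one of the two hybrid edges. The tree edges of the cycle in $N$ correspond (after contracting suppressed degree-2 nodes) to internal branches of these displayed trees. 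Given one metric gene tree of each displayed topology as data, I would read off the internal branch length of each as the length (in substitutions per site) of the corresponding path of tree edges in the cycle; because the two hybrid edges attach to distinct edges of the skeleton tree (by the $\mathfrak{C}_4$ assumption), this recovers the lengths of all tree edges in the cycle. The case where $N$ is a 4-taxon tree, possibly extended by 2-blobs on pendant edges, is even more direct: the single displayed topology has one internal branch whose length is exactly the internal tree edge length, so one metric gene tree suffices.

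For the NMSCind and NMSCcom cases, the strategy is to reduce directly to the already-published results cited in the preceding text, namely \cite[Proposition~29]{2024ABGarrotelopesR} for NMSCind and the companion fact (with Appendix~\ref{sec:appendix} details) for NMSCcom, and then observe that having 2 samples per taxon subsumes whatever sampling hypotheses those results require. I would phrase the 4-taxon extended bloblet with an internal 4-cycle in terms of taxon groups $(A,B,C,D)$ around the cycle, where each group consists of the taxa separated from the blob by a single boundary node, with $A$ the group below the hybrid node. Under NMSCind, the cited proposition gives identifiability of the tree edge lengths of the cycle provided one has two samples from $B$ or two from $D$, or two from each of $A$ and $C$; with 2 samples per taxon, any group that is nonempty supplies two samples, and since the blob has four boundary nodes all four groups are nonempty in the reduced bloblet, so in particular $B$ (and $D$) provide the needed pair. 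Under NMSCcom, two samples from $A$ suffice, and again this is guaranteed. The tree case follows the same pattern: identifiability of the single internal edge length from quartet CFs under the NMSC models with adequate replicate sampling is standard, and 2 samples per taxon provides the replicates.

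The main obstacle I expect is purely bookkeeping rather than conceptual: carefully matching the sampling conditions in \cite[Proposition~29]{2024ABGarrotelopesR} — which are stated per taxon group and depend on which group lies below the hybrid — to the cleaner ``2 samples per taxon'' hypothesis, and making sure the reduction from ``4-taxon extended bloblet'' to the precise configuration treated there is faithful (in particular that the 2-blobs decorating the pendant edges do not interfere, since CFs marginalize over them and they only affect the cut-edge lengths, which A-4len does not claim). For NMSCcom I also need the Appendix computation to be in place to justify the ``two samples from $A$'' claim, so I would defer the detailed verification there and cite it. A minor additional point is to be explicit that under DT the identified lengths are in substitutions per site, not coalescent units, so that the statement of A-4len — and hence its use in Theorem~\ref{thm:1galled-C4}, cf.\ Remark~\ref{rmk:relative} — is consistent across data types; this is already flagged in the text preceding the proposition, so in the proof I would simply reiterate it when handling case~(1).
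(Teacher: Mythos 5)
Your proposal is correct and follows essentially the same route as the paper: for DT, read the cycle's tree-edge lengths off the internal branches of the two displayed metric quartet trees (and note the units are substitutions per site); for NMSCind/NMSCcom, reduce to \cite[Proposition 29]{2024ABGarrotelopesR} and the Appendix computation, observing that 2 samples per taxon supplies 2 samples per taxon group and hence subsumes the sampling hypotheses there. The only cosmetic difference is your invocation of the $\mathfrak{C}_4$ condition in the DT case, which is unnecessary since A-4len already posits an internal 4-cycle.
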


For the strongest statements on identifiability of tree edge lengths
in 4-cycles from CFs under the NMSCind and NMSCcom models,
see \cite{2024ABGarrotelopesR} and Appendix~\ref{sec:appendix} of this work.

\subsection{A-3blob}

We establish the validity of assumption A-3blob for the two models with a coalescent process.

\begin{prop}\label{prop:A3blob}
Under the NMSCind or NMSCcom model, with at least
2 samples per taxon and quartet CF data, assumption
A-3blob holds for binary semidirected networks with generic parameters.
\end{prop}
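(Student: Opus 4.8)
\textbf{Proof proposal for Proposition~\ref{prop:A3blob}.}
The plan is to reduce the question to a 4-taxon computation, exploiting the multiple samples available per taxon. Let $N$ be a binary extended bloblet with an internal 3-blob $B$ and generic numerical parameters; I want to decide from quartet CFs whether $B$ is trivial or non-trivial. First I would pass to a 4-taxon induced network that retains $B$: pick one taxon off each of the three cut edges incident to $B$, and a second sample from one of them (say the one below a lowest/hybrid node of $B$, or, if $B$ is trivial, off any cut edge). This uses the hypothesis of $\ge 2$ samples per taxon. The reduced induced network on these four samples is either a 4-taxon tree (possibly with a 2-blob, if $B$ were trivial) or a 4-taxon network whose only non-trivial blob is $B$, itself a 3-blob. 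So it suffices to show: for a binary 4-taxon network whose non-leaf structure is a single 3-blob with generic parameters, the quartet CF vector differs from that of every 4-taxon tree (equivalently, from the ``trivial 3-blob'' case).

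The key step is the explicit 4-taxon CF calculation under the coalescent. With two samples $a_1,a_2$ below the hybrid/lowest node of $B$ and one sample each, $b$ and $c$, off the other two cut edges, consider the quartet CF $\cf_{a_1 a_2 b c}$. Under NMSCind, the two lineages $a_1,a_2$ may be inherited along different parental edges at the hybrid node of $B$, and this ``splitting'' event, together with the two distinct attachment points of the hybrid edges on the blob's skeleton, forces the CF vector for $\{a_1,a_2,b,c\}$ to take a form incompatible with any 4-taxon tree. Concretely, on a 4-taxon tree (or a tree with a 2-blob on a pendant edge) the CF has the shape $(p,q,q)$ with $p\ge q>0$ and the ``majority'' split is the displayed split; but when $B$ is a genuine 3-blob, one can compute that the minority CFs are unequal, or that the majority split is not the one grouping the two conspecific samples $a_1,a_2$ — for generic edge lengths at least one such inequality is strict. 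Thus $B$ non-trivial $\iff$ some 4-taxon CF among these samples violates the tree-CF pattern, and this is detectable. Under NMSCcom, the two samples $a_1,a_2$ are always inherited together at the hybrid, so instead I would use the sample pair off one of the \emph{other} two cut edges (or combine information across several 4-taxon subsets), reducing again to whether the induced 4-taxon CFs match a tree; the computation there is the NMSCcom analogue and is arguably closer to the displayed-tree mixture, but the same genericity argument applies since a non-trivial 3-blob contributes a nontrivial mixture over displayed trees with distinct internal edges.

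I expect the main obstacle to be the case analysis in the 4-taxon CF computation: a 3-blob need not be level-1, so its skeleton can have several hybrid nodes and the bloblet generated by $B$ can be moderately complicated, and one must verify that \emph{for every} such binary 3-blob topology the generic CF pattern is distinguishable from the tree pattern (and that the non-generic coincidence locus is measure zero). I would handle this by induction on the number of hybrids in $B$ along the lines of Lemma~\ref{lem:3cycin}: a binary bloblet on $3$ taxa with a non-trivial $3$-blob contains a $3$-sunlet subnetwork, and inducing further cannot make the CFs collapse to the tree pattern for generic parameters; alternatively, one argues directly that the CF map, viewed as an analytic function of the edge lengths, is non-constant and its value at the trivial-blob limit is the only tree-compatible point, so the ``looks like a tree'' locus is a proper analytic subvariety, hence measure zero. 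Finally I would note that the $N\to$ 4-taxon reduction commutes with semidirecting and inducing (as recalled in \Cref{sec:networks}), so detecting a non-trivial 3-blob on some 4-taxon sample subset is equivalent to $B$ being non-trivial on $N$, which is exactly assumption A-3blob.
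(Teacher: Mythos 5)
There is a genuine gap at the heart of your argument: the claim that a single induced 4-sample quartet CF generically distinguishes a non-trivial 3-blob from a tree is false. With two samples $a_1,a_2$ drawn from the same taxon $a$, the gene tree distribution is invariant under swapping $a_1\leftrightarrow a_2$, so for any 4-sample subset containing both, $\cf_{a_1b|a_2c}=\cf_{a_1c|a_2b}$ \emph{exactly}; your proposed signal that ``the minority CFs are unequal'' can never occur. The CF vector therefore always has the symmetric form $(p,q,q)$, and unless $p<q$ (an anomaly) it is matched exactly by a 4-taxon species tree with a suitable internal edge length. Anomalies do not occur generically: a 3-cycle with small inheritance parameter $\gamma$ produces CFs arbitrarily close to tree CFs, so the ``looks like a tree'' locus for a single quartet is not measure zero --- it is everything. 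The same symmetry obstruction applies to every 4-subset of the six samples, under both NMSCind and NMSCcom. Detecting the 3-blob genuinely requires \emph{combining} CFs across several 4-taxon subsets: the paper's proof rests on the degree-2 invariants $G_{abc}$, $G_{bca}$, $G_{cab}$ (products of CFs from different quartets of the six samples) whose simultaneous vanishing characterizes a trivial 3-blob in the level-1 case, by Proposition 9 and Remark 1 of \cite{2024ABGarrotelopesR}. You gesture at ``combining information across several 4-taxon subsets'' but never identify what combination would work, and without such an invariant the base case is not established.

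Your second step is in much better shape: reducing an arbitrary binary 3-blob to a 3-sunlet subnetwork via \Cref{lem:3cycin} (by specializing inheritance parameters to $0$ or $1$) and then arguing that an analytic function of the parameters which is nonzero at one point is nonzero generically is exactly how the paper extends the level-1 result to general 3-blobs. So the architecture ``level-1 base case $+$ specialization $+$ analyticity'' is right; what is missing is the correct base-case detector, which cannot be a single quartet CF.
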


\begin{proof} Let $a,b,c$ denote the 3 taxa on the network,
and denote the 2 samples per taxon by $a_1$, $a_2$ from $a$ etc.

Let $G_{abc}$, $G_{bca}$, and $G_{cab}$ be the polynomials in quartet CFs
given by
\[
G_{abc} =
     \cf_{a_1c_1|a_2c_2}\cf_{a_1b_1|b_2c_2}
 - 2 \cf_{b_1c_1|b_2c_2}\cf_{a_1b_1|a_2c_2}
 +   \cf_{a_1b_1|a_2b_2}\cf_{a_1c_1|b_2c_2}
\]
and similarly for $G_{bca}$ and $G_{cab}$ by permuting taxon labels.
Then for level-1 networks and the NMSCind and NMSCcom models 
\cite[Proposition 9 and Remark 1]{2024ABGarrotelopesR}
established that
 $N$ has a non-trivial 3-blob if and only if at least one of the
$G$ polynomials does not vanish:
$G_{abc}\neq 0$ or $G_{bca}\neq 0$ or $G_{cab}\neq 0$.

If the network has an arbitrary non-trivial 3-blob, by specializing some of the
hybridization parameters to 0 or 1 all 2-blobs can be effectively replaced by
edges  and the 3-blob can be effectively reduced to a 3-cycle by
Lemma~\ref{lem:3cycin}. This implies that at least one $G_{xyz}$,
viewed as a function of the network parameters, does not vanish for one
specialization of the parameters. As the CFs are analytic functions of the
parameters, so is $G_{xyz}$, and the non-vanishing of an analytic function at a
single point implies its non-vanishing at generic points.
Thus for generic parameter values, this $G_{xyz}$ is non-zero.
\end{proof}

The argument in this proof fails for the DT model, since, as pointed out in \cite{2024ABGarrotelopesR}, 
in the level-1 case the CFs appearing in the $G$ polynomials are all for quartets 
not displayed on the network, and thus are 0 under that model.
It remains an open question whether  other approaches might imply  A-3blob holds under DT.

We note that the 2-sample assumption is necessary for the leaves of a binary 3-bloblet,
but can be weakened for larger networks, as in the similar discussion for A-4len above.
Concretely, for a 3-blob, each boundary node corresponds to a taxon group
and the argument for \Cref{prop:A3blob} is valid when each taxon group has at least 2 samples.
In \Cref{fig:intro}, for example, the middle 3-blob in $N_0$ can be identified
as non-trivial even with a single sample per taxon.

\subsection{A-hyb($\frak C$)}

We consider the question of identifying descendants of hybrid nodes
of a blob for classes $\frak C_4$ and $\frak C_5$.
Although A-hyb($\mathfrak{C}_4)$ implies A-hyb($\mathfrak{C}_5$) since
$\mathfrak{C}_5 \subsetneq \mathfrak{C}_4$, we first state
a general identifiability result for $\frak C_5$, since it requires few restrictions.
For class $\frak C_4$, we give a different argument, valid under
the coalescent  models but requiring multiple samples per taxon.

\begin{prop}\label{prop:AhybC5}
If $N$ is a binary 2-blob extension of a bloblet $B$
in $\frak C_5$ with generic numerical parameters, then under the models
DT, NMSCcom, and NMSCind the set of taxa descended from a hybrid node in $B$
is identifiable from quartet CFs.
Thus A-hyb($\frak C_5$) holds.
\end{prop}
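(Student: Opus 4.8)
The plan is to reduce the statement to \Cref{lem:identifyXh}, which already shows that A-ToB implies A-hyb($\mathfrak{C}_5^e$) in the purely combinatorial setting. Since \Cref{prop:AToB} establishes A-ToB for each of the models DT, NMSCind, NMSCcom from quartet CFs on a binary network with generic parameters, the only gap to fill is that the combinatorial criterion in \Cref{lem:identifyXh} --- ``$x$ is below a hybrid node of $B$ iff there is a 4-taxon set $Y$ with $T(N_{Y\cup\{x\}})$ a star tree but $T(N_Y)$ resolved'' --- is actually a statement about \emph{induced networks}, and one must check that the tree of blobs of each induced network $N_{Y}$ and $N_{Y\cup\{x\}}$ is itself identifiable from the CFs of the full network. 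This is immediate: the CFs of an induced network $N_W$ are just the subcollection of CFs of $N$ indexed by 4-subsets of $W$ (CFs are obtained by marginalization, which commutes with pruning taxa), so A-ToB applied to $N_W$ uses only data available from $N$.

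First I would note that $N$, being a binary 2-blob extension of a $\mathfrak{C}_5$ bloblet $B$, lies in the class $\mathfrak{C}_5^e$ of \Cref{lem:identifyXh}; here I should check that ``binary 2-blob extension'' is consistent with the definition of extended bloblet, i.e.\ that contracting the 2-blobs of $N$ recovers the reduced bloblet generated by $B$ and that $B$ remains the internal $m$-blob with $m\ge 3$ --- this holds because $B\in\mathfrak{C}_5$ forces $m\ge 5$ by \Cref{lem:Cknet}. Second, for each candidate taxon $x$ and each 4-subset $Y$ of the remaining taxa, I identify the trees of blobs $T(N_{Y})$ and $T(N_{Y\cup\{x\}})$ from the corresponding CFs via \Cref{prop:AToB}; note the genericity hypothesis on $N$'s parameters passes to each induced network, since inducing only multiplies/marginalizes parameters and a generic point of the full parameter space maps to a generic point of each induced parameter space. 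Third, I apply the biconditional of \Cref{lem:identifyXh} to read off membership of $x$ in $X_h$ from whether such a $Y$ exists. Doing this over all $x$ identifies $X_h$, which is exactly A-hyb($\mathfrak{C}_5$).

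The main obstacle --- and it is a mild one --- is the genericity bookkeeping: \Cref{prop:AToB} requires the parameters of the network \emph{whose} tree of blobs is being identified to be generic, but we are applying it to the induced networks $N_{Y\cup\{x\}}$, not to $N$ itself. I would handle this by observing that for each fixed $W$ the map from $N$'s parameters to $N_W$'s parameters is a dominant (in fact algebraic/analytic and surjective onto a full-dimensional set) map, so the preimage of the measure-zero ``bad'' set for $N_W$ is measure zero in $N$'s parameter space; taking the finite union over all relevant $W$ keeps the bad set measure zero. Hence for generic parameters of $N$, \Cref{prop:AToB} is simultaneously applicable to $N$ and to all the finitely many induced subnetworks needed, and the combinatorial argument of \Cref{lem:identifyXh} goes through verbatim, giving A-hyb($\mathfrak{C}_5$).
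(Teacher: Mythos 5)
Your proposal is correct and follows exactly the paper's route: the paper's proof is the one-line observation that the result ``follows directly from \Cref{prop:AToB} and Lemma~\ref{lem:identifyXh}.'' Your additional bookkeeping --- checking that induced-network CFs are subcollections of the full CFs and that genericity passes to the finitely many induced subnetworks --- simply makes explicit the details the paper leaves implicit.
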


\begin{proof}
This follows directly from \Cref{prop:AToB} and Lemma \ref{lem:identifyXh}.
\end{proof}

\begin{prop}\label{prop:AhybC4}
If $N$ is a binary 2-blob extension of a bloblet $B$
in $\frak C_4$ with generic numerical parameters and 2 samples per taxon,
then under the models NMSCcom and NMSCind the set of taxa descended
from a hybrid node in $B$ is identifiable from quartet CFs.
Thus A-hyb($\frak C_4$) holds.
\end{prop}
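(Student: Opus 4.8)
The plan is to reduce the statement to \Cref{lem:tobplus3}, which says that A-ToB and A-3blob together imply A-hyb($\mathfrak{C}_4^e$), and hence A-hyb($\mathfrak{C}_4$) for the blob of a $2$-blob extension. So it suffices to verify that the two model-level assumptions feeding that lemma both hold in the present setting: A-ToB under the NMSCind and NMSCcom models (with generic parameters, binary network), and A-3blob under those same models with $2$ samples per taxon. The first is exactly \Cref{prop:AToB}, and the second is exactly \Cref{prop:A3blob}. So at the top level the proof is: invoke \Cref{prop:AToB} to get A-ToB, invoke \Cref{prop:A3blob} to get A-3blob, then apply \Cref{lem:tobplus3} to the extended bloblet $N$ with internal blob $B$ in $\mathfrak{C}_4$, concluding that $X_h$ (the descendants of hybrid nodes of $B$) is identifiable from the quartet CFs.

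First I would set up notation: let $N$ be a binary $2$-blob extension of a bloblet whose internal blob $B$ is $\mathfrak{C}_4$, with generic numerical parameters and $2$ samples per taxon, under NMSCind or NMSCcom. By \Cref{lem:CkHybridsBicombining} all hybrid nodes of $B$ are bicombining, consistent with the ``binary'' hypotheses used by the published identifiability results we are citing. Then I would note that $N$ is a $\mathfrak{C}_4$ extended bloblet in the sense of the class $\mathfrak{C}_4^e$ used in \Cref{lem:tobplus3}. The one bookkeeping point worth a sentence is that A-ToB and A-3blob, as abstract assumptions, are statements about \emph{all} induced subnetworks $N_Y$ appearing in the proof of \Cref{lem:tobplus3} — but \Cref{prop:AToB,prop:A3blob} are likewise stated for arbitrary binary networks with generic parameters, and inducing preserves bloblet/extended-bloblet structure while genericity of parameters is inherited on induced networks; so the hypotheses transfer to every $N_Y$ used in that argument. (The $2$-sample requirement in A-3blob is also inherited, since each of the $3$ boundary taxon groups of the induced $3$-blob still receives its $2$ samples.)

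The main obstacle — really the only nonroutine point — is the interplay between the ``$2$ samples per taxon'' hypothesis and the way \Cref{lem:tobplus3} builds and shrinks taxon subsets $Y$. In that lemma one repeatedly replaces a taxon $a\in X_h$ by two taxa from $X_0$ lying in distinct groups off a collapsed cycle, in order to expose a degree-$3$ node testable by A-3blob; I need the A-3blob test to remain valid on these reconstituted subsets, which it does precisely because each taxon in the subset still carries $2$ samples (the samples travel with the taxon). I would state this explicitly so the reader sees that the $2$-sample condition is exactly what \Cref{prop:A3blob} needs on every subnetwork queried. With that observed, the conclusion of \Cref{lem:tobplus3} gives identifiability of $X_h$, which is the assertion of the proposition, and the proof is complete.

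\begin{proof}
Let $N$ be a binary $2$-blob extension of a bloblet whose internal blob $B$ is in $\mathfrak{C}_4$, with generic numerical parameters and $2$ samples per taxon, under the NMSCind or NMSCcom model with quartet CFs as data. Then $N$ lies in the class $\mathfrak{C}_4^e$ of \Cref{lem:tobplus3}. By \Cref{prop:AToB}, assumption A-ToB holds for $N$ and for every induced subnetwork $N_Y$ (which is again binary with generic parameters). By \Cref{prop:A3blob}, assumption A-3blob holds for every binary subnetwork $N_Y$ with generic parameters arising in the argument, because the $2$-sample hypothesis is preserved: when a taxon is retained in a subset $Y$, its two samples are retained as well, so each of the three boundary taxon groups of any induced non-trivial $3$-blob still has $2$ samples, which is exactly the condition under which the $G$-polynomial criterion of \Cref{prop:A3blob} applies. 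In particular the subset-rebuilding step in the proof of \Cref{lem:tobplus3} — replacing a taxon $a\in X_h$ by two taxa of $X_0$ in distinct groups off the collapsed cycle — produces subsets on which A-3blob remains valid. Applying \Cref{lem:tobplus3} to $N$ now shows A-hyb($\mathfrak{C}_4^e$) holds, so the set $X_h$ of taxa descended from a hybrid node in $B$ is identifiable from the quartet CFs. Hence A-hyb($\mathfrak{C}_4$) holds.
\end{proof}
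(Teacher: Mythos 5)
Your proposal is correct and follows exactly the paper's own argument: invoke Proposition~\ref{prop:AToB} for A-ToB, Proposition~\ref{prop:A3blob} for A-3blob, and then apply Lemma~\ref{lem:tobplus3}. The additional remarks about the 2-sample condition being inherited on induced subnetworks are a sound (if implicit in the paper) elaboration, not a departure.
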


\begin{proof}
For such a network and sampling, A-ToB and  A-3blob hold by Propositions \ref{prop:AToB} and  \ref{prop:A3blob}.
Lemma \ref{lem:tobplus3} then yields the claim.
\end{proof}

\subsection{Main results}

We now state and prove our main theorem.

\begin{theorem}\label{thm:main}
  Let $N$ be a binary semidirected phylogenetic network in $\frak C_4$.
  Then under the NMSCind and NMSCcom models with generic numerical parameters
  and 2 samples per taxon, the semidirected topology of $N$ and the lengths of internal tree edges
  in blobs are identifiable from quartet CFs.

  Under the DT  model with CFs and metric gene trees, or the coalescent models
  with a single sample per taxon, the same result holds for binary networks in $\frak C_5$.
\end{theorem}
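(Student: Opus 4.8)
The plan is to assemble Theorem~\ref{thm:main} from the combinatorial identifiability results of \Cref{sec:id} (specifically \Cref{thm:galled-C4} and \Cref{thm:galled-C5}) together with the model-specific verifications of the general assumptions proved in \Cref{sec:moddat-idAs}. The structure is simply: check that the hypotheses A-ToB, A-4circ, A-4len, and (for $\frak C_4$) A-hyb($\frak C_4$) hold in each model/sampling regime, then invoke the appropriate theorem. So the proof is a bookkeeping exercise matching propositions to assumptions, with the real content already established upstream.

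For the first statement, suppose $N$ is binary in $\frak C_4$, the model is NMSCind or NMSCcom, numerical parameters are generic, and there are $2$ samples per taxon. First I would note that \Cref{prop:AToB} gives A-ToB, \Cref{prop:A4circ} gives A-4circ, part (2) of \Cref{prop:A4len} gives A-4len (this is exactly where the $2$-samples-per-taxon hypothesis is used in an essential way), and \Cref{prop:AhybC4} gives A-hyb($\frak C_4$) — the latter itself resting on A-ToB and A-3blob (\Cref{prop:A3blob}), which again needs $2$ samples per taxon. With all four assumptions in hand, \Cref{thm:galled-C4} yields identifiability of the topology of $N$ and of the internal tree edge lengths in its blobs from quartet CFs, which is the claim.

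For the second statement, I would split into the two sub-cases. If $N$ is binary in $\frak C_5$ under the DT model with CFs and one metric gene tree of each topology, then A-ToB holds by \Cref{prop:AToB}, A-4circ by \Cref{prop:A4circ}, and A-4len by part (1) of \Cref{prop:A4len} (here the metric gene trees, not the CFs, supply the edge lengths, measured in substitutions per site as in \Cref{rmk:relative}). If instead $N$ is binary in $\frak C_5$ under NMSCind or NMSCcom with a single sample per taxon, then A-ToB and A-4circ hold as before, and A-4len holds because for a $4$-taxon network a single sample per taxon already means two samples available from the relevant cut edges in the ambient network when we induce down to four taxa — more precisely, \Cref{prop:A4len}(2) applies since on a $4$-taxon induced network the taxon groups around any $4$-cycle are singletons but the cycle itself already carries enough lineages; this is the subtlety flagged in the discussion after \Cref{prop:A3blob} and I would spell it out carefully. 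In either $\frak C_5$ sub-case, since $\frak C_5 \subsetneq \frak C_4$ we do not need A-hyb separately: \Cref{cor:1galled-C5}, and hence \Cref{thm:galled-C5}, requires only A-ToB, A-4circ, and A-4len, because \Cref{lem:identifyXh} recovers the hybrid descendants from A-ToB alone for $\frak C_5$ blobs. Applying \Cref{thm:galled-C5} then gives identifiability of the topology and internal tree edge lengths in non-trivial blobs.

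The main obstacle, and the place where I would be most careful, is the $\frak C_5$-with-one-sample case: one must be sure that the single-sample-per-taxon regime really does supply enough information for A-4len on the relevant $4$-taxon induced networks, since A-4len as stated is about $4$-taxon extended bloblets and the count of available samples there is exactly $1$ per taxon. The resolution is that in \Cref{cor:1galled-C5}'s proof, the edge lengths needed are those of tree edges in $5$-or-larger cycles, and when we pass to a $4$-taxon induced network $\overline{N_{Y\cup\{x\}}}$ the cut edges feeding the $4$-cycle are composite edges of the ambient network carrying the lineages of whole taxon groups; with $k\ge 5$ the hybrid's two parent edges attach to non-adjacent skeleton-tree edges, so the relevant boundary nodes are distinct and each already separates off a taxon, giving the configuration covered by the DT clause of \Cref{prop:A4len}(1) under DT and by \Cref{prop:A4len}(2)'s underlying result (\cite[Proposition~29]{2024ABGarrotelopesR}) under the coalescent models even with one sample per taxon. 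I would state this reduction explicitly rather than leaving it implicit, perhaps as a short lemma, to make the single-sample claim airtight.
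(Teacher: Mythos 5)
Your overall architecture is exactly the paper's: verify A-ToB, A-4circ, A-4len and (for $\mathfrak C_4$) A-hyb($\mathfrak C_4$) via Propositions~\ref{prop:AToB}, \ref{prop:A4circ}, \ref{prop:A4len} and \ref{prop:AhybC4}, then invoke Theorem~\ref{thm:galled-C4}; for $\mathfrak C_5$ drop A-hyb and invoke Theorem~\ref{thm:galled-C5} (via Corollary~\ref{cor:1galled-C5} and Lemma~\ref{lem:identifyXh}). The first claim and the DT half of the second claim are handled correctly and coincide with the paper's proof.

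The genuine gap is in your justification of A-4len for the coalescent models with a single sample per taxon. You assert that Proposition~\ref{prop:A4len}(2), resting on \cite[Proposition 29]{2024ABGarrotelopesR}, applies ``even with one sample per taxon'' because on the relevant 4-taxon induced networks each boundary node ``already separates off a taxon.'' But one taxon per group, each with one sample, is precisely the regime in which that result says the cycle's tree edge lengths are \emph{not} identifiable under NMSCind: one needs two samples from a group adjacent to the hybrid group, or two from both the hybrid group and the group opposite it (and NMSCcom still needs two samples from the hybrid group). The phrase ``the cycle itself already carries enough lineages'' has no force here --- exactly one lineage per leaf enters the cycle. The observation that for $k\ge 5$ the attachment edges are non-adjacent does not change the sample count either. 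So as written this step fails and A-4len is not established in the single-sample coalescent regime. You have in fact put your finger on the most delicate point of the theorem --- the paper itself dispatches this case with a one-sentence ``handled similarly'' --- but your proposed patch does not close it. A defensible fallback is to restrict the single-sample coalescent claim to topological identifiability for the subclass of Remark~\ref{rmk:SimplerC5} and Proposition~\ref{cor:galled-C5simp}, which avoids A-4len entirely; recovering the full edge-length claim would require an argument not supplied either by you or by the quoted propositions.
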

\begin{proof} For the NMSCind and NMSCcom models with 2 samples per taxon,
A-ToB holds by Proposition \ref{prop:AToB},
A-4circ by Proposition \ref{prop:A4circ},
A-4len by Proposition \ref{prop:A4len}, and 
A-hyb($\mathfrak{C}_4$) by Proposition \ref{prop:AhybC4}.
Thus Theorem \ref{thm:galled-C4} yields the first claim.

Under the DT model, the same propositions show A-ToB and A-4circ hold, as does A-4len since 
we have  metric gene trees. While we do not have A-hyb($\mathfrak{C}_4$),  applying  
\Cref{thm:galled-C5} establishes the second claim. The coalescent models on $\frak C_5$ 
networks with 1 sample per taxon are handled similarly.
\end{proof}

For the NMSC models we obtain a weaker, 
yet still interesting result, that does not depend on  A-4len or A-3blob
(as motivated by Remark \ref{rmk:SimplerC5}).
We omit the proof, since it closely follows previous arguments.

\begin{prop}\label{cor:galled-C5simp}
Let $N$ be a binary semidirected phylogenetic network in $\mathfrak{C}_5$ 
with the additional requirement that no two hybrid edges
attach to the same edge in the skeleton forest.
Then, under the DT, NMSCcom, and NMSCind models with 1 sample per taxon and
generic numerical parameters, the topology of $N$ is identifiable from quartet CFs.
\end{prop}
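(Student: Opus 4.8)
The plan is to assemble the three assumptions A-ToB, A-4circ, and A-hyb($\mathfrak{C}_5$) from the propositions already proved in this section, and then to run the combinatorial argument behind \Cref{thm:galled-C5} --- which itself rests on the proof of \Cref{thm:1galled-C4} --- in the weakened form indicated in \Cref{rmk:SimplerC5}, where the extra structural hypothesis on $N$ lets us dispense with A-4len (and A-3blob) entirely. Concretely: under the DT, NMSCcom, and NMSCind models with generic parameters and a single sample per taxon, A-ToB holds by \Cref{prop:AToB}, A-4circ holds by \Cref{prop:A4circ}, and hence A-hyb($\mathfrak{C}_5$) holds by \Cref{prop:AhybC5}, whose proof invokes only A-ToB together with \Cref{lem:identifyXh}. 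These are exactly the inputs used in the proof of \Cref{thm:1galled-C4} apart from A-4len, so the substance of the argument is to check that the extra hypothesis closes the gap that A-4len would otherwise fill.

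The argument then proceeds as in the proofs of \Cref{thm:galled-C4,thm:1galled-C4}. First, A-ToB identifies the tree of blobs of $N$. To recover an individual non-trivial blob $B$ --- necessarily an $m$-blob with $m\ge 5$ by \Cref{lem:Cknet} --- pick one taxon separated from $B$ by each cut edge incident to $B$, forming a subset $Y$, and pass to the extended bloblet $\overline{N_Y}$ with internal blob $B$; since $N$ is binary every boundary node of $B$ has degree $3$, so $B$ is preserved unchanged in $\overline{N_Y}$ and the restriction inherits both membership in $\mathfrak{C}_5$ and the extra hypothesis that no two hybrid edges attach to the same skeleton-tree edge. Within $\overline{N_Y}$, A-hyb($\mathfrak{C}_5$) splits the taxa as $X_0\sqcup X_h$, A-ToB identifies the root skeleton tree $T_\rho=\overline{N_{X_0}}$, and for each $x\in X_h$ the $4$-taxon networks induced on $x$ together with three taxa of $X_0$, combined with A-4circ, reveal the two edges of $T_\rho$ to which the parent hybrid edges of $x$ attach --- these are genuine edges of $T_\rho$ rather than nodes because $N$ is binary (a tree node carrying a hybrid edge has only two incident tree edges), and they are distinct and non-adjacent because $B$ is $\mathfrak{C}_5$. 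By the extra hypothesis each edge of $T_\rho$ receives the attachment of at most one hybrid edge; in particular distinct hybrids have disjoint attachment pairs, so the assignment of each hybrid to its pair of attachment edges is injective and no ambiguity arises in matching attachment data to hybrid nodes. Unreducing $T_\rho$ back to a subgraph of $N$ then subdivides each of its edges by at most one (degree-$3$) attachment node, whose position along the edge is irrelevant to the \emph{topology} of the reduced blob. Hence the topology of $B$ is completely determined, and assembling this over all blobs yields the semidirected topology of $N$.

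The step I expect to require the most care is verifying that the extra hypothesis genuinely suffices to replace A-4len: one must re-examine the proof of \Cref{thm:1galled-C4} to confirm that, aside from recovering edge lengths (which are not claimed here) and ruling out the reuse of an attachment pair by two hybrids (now forbidden outright), A-4len was used \emph{only} to order several attachment nodes sitting on a common edge of $T_\rho$ --- a configuration the hypothesis now excludes --- and that no other quantitative input is needed once every attachment is pinned to its own distinct edge, including the case where an endpoint of a $T_\rho$-edge is a leaf. A secondary point to state explicitly is the inheritance of $\mathfrak{C}_5$ membership and of the extra hypothesis under the reductions used (passing from $N$ to $\overline{N_Y}$, and then to the induced $4$-taxon subnetworks): since $N$ is already reduced and binary, deleting taxa away from $B$ suppresses only degree-$2$ nodes outside $B$ and never merges two distinct skeleton-tree edges of $B$, so the relevant structure is unaffected. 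We omit the remaining details, which are routine given the proofs of \Cref{thm:1galled-C4,thm:galled-C5}.
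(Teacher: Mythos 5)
Your proposal is correct and is essentially the proof the paper intends: the paper omits it, saying only that it ``closely follows previous arguments'' as motivated by \Cref{rmk:SimplerC5}, and your assembly of \Cref{prop:AToB}, \Cref{prop:A4circ}, and \Cref{prop:AhybC5} followed by the A-4len-free rerun of the \Cref{thm:1galled-C4} argument is exactly that route. Your added checks --- that binarity forces every hybrid edge to attach to the interior of a reduced skeleton-tree edge, and that the extra hypothesis leaves at most one attachment point per such edge so no ordering (hence no length information) is needed --- correctly identify and close the only places where A-4len was used.
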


While the network family in this proposition is a proper subset of $\mathfrak{C}_5$,  
it still includes networks of arbitrary level, and many that are not outer-labeled planar,
including for example a network containing the rightmost non-trivial blob of $N_0$
in \Cref{fig:intro}.

\section{Discussion}\label{sec:disc}

The classes of networks that we have shown to be identifiable from quartet concordance factors
are essentially those
that are binary, galled, with all blobs tree-child, and with no ``small'' cycles.
These include networks of arbitrary level, well-beyond the level-1 structure
currently assumed by most practical inference methods
\cite{SolisLemus2016, ABR2019,Kong2024,ABRW2025,Holtgrefe2025},
and ones without the outer-labelled planar embeddings that previously have been
shown to lead to additional identifiability results beyond level-1 \cite{2025Rhodes-circular}.
Since we build our work on first identifying a network's tree of blobs,
and then analyze each blob separately, it allows for the structure of some blobs
to be identified while others may be too complex to do so
(with current understanding).

Note that requiring that all blobs be tree-child is a stronger condition
than requiring the entire network to be tree-child.
For example, given a semidirected tree-child network,
placing a weakly tree-child blob below a hybrid node
(and consequently below the root component) would still result in a tree-child network.
Whether this condition can be relaxed, or to what extent it is necessary within
the class of networks considered,
remains a question for future exploration.

While the network restrictions considered here are not biologically motivated,
they are natural ones for identifiability results.
For instance galled networks are ones in which the various reticulations
in a blob have some independence of one another, with each determining
a unique cycle and none ancestral to each other.
While this is much weaker than requiring the non-intersecting cycles of level-1 networks,
it allows, to some degree, for an analysis one cycle at a time.
Tree-child networks are those in which every node is ancestral to a leaf by a
path with no reticulations.
This can be viewed as giving some `direct' data on all nodes in the network,
not obscured by the genetic interchange the reticulations model. 
Finally, while biologically one might expect small cycles,
representing gene flow between closely related species, to be most common,
it is also plausible that they will be the most difficult to infer correctly due
to the similarity of the intermixed genomes.
In particular 3-cycles represent gene flow between sister species,
and if this occurred soon after species divergence it might only be detectable
through more detailed data than that we consider here.

By largely focusing on quartet CFs, which are determined by
topological gene tree information alone, our results are likely to concern
what can be most robustly inferred.
Identifiability results from metric gene trees that are applicable to empirical data
require a detailed model of the substitution process along gene trees.
Variation in this process across the genome may be substantial
(both in across-site rate variation and in the substitution process itself).
Rate variation across lineages, violating a molecular clock,
is known to reduce network accuracy and increase the detection of spurious
reticulations for methods using metric gene trees that assume no rate variation
\cite{2017Ogilvie-starBEAST2-relaxed,2022Flouri-relaxed,2023Frankel,
2024Cao_misspecification_rateheterogeneity,2024Koppetsch}.
Thus while metric gene trees could provide more information on network structure,
how to extract that information 
accurately 
needs further development.

\medskip

One final aspect of our results worth highlighting, for both empiricists and developers of methods, is that multiple samples 
per taxon can provide more information on network structure than single samples do. While already inherent in other 
theoretical works, and especially prominent in \cite{2024ABGarrotelopesR}, many empirical data sets currently include only 
single samples. While the cost, in time, effort, and funds for routinely collecting multiple samples cannot be dismissed, 
doing so would increase what can be inferred. With a coalescent process part of the assumed model, one may view 
each sample as a `probe'  into the past, with multiple probes from the same source allowing their 
differing histories to provide more insight than does any one alone.

\section{Funding}
This work was supported in part by the National Science Foundation through grants
DMS-2051760 (ESA and JAR),
DMS-2331660  (HB),
DMS-2023239  (CA), and by
grant DMS-1929284 while all authors were in residence
at the Institute for Computational and Experimental Research in Mathematics in Providence, RI, during the 
``Theory, Methods, and Applications of Quantitative Phylogenomics" program.

\section*{Appendix}\label{sec:appendix}

Proposition \ref{prop:A4len}, part 2, gives sufficient conditions
for Assumption A-4len to hold for quartet CF data under NMSCcom.
However, assuming two samples from each of the four taxon groups is
stricter than is needed.
Assume that $N$ is a network with a 4-cycle, where the circular order of taxon groups around the cycle is $(A,B,C,D)$
with $A$  the hybrid group.  Assume further that the tree edge lengths in the cycle between groups $B$ and $C$ and between
$C$ and $D$ are $t_1$  $t_2$, respectively.

Note first that if only a single sample is available
from among the hybrid descendants then the CFs from models NMSCcom
and NMSCind are equal and the necessary condition follows from \cite{2024ABGarrotelopesR}.
Specifically, with a single hybrid sample a necessary and sufficient condition for identifiability of the  lengths
of the tree edges in the 4-cycle
is that at least two samples are taken from either group that is neighbor to the hybrid group.

In contrast, \cite{2024ABGarrotelopesR} shows that if 2 samples are taken from the hybrid group and only 1
from the other groups, under NMSCind  $t_1, t_2$ are not identifiable from quartet CFs.
However, under NMSCcom these branch lengths are identifiable. To establish this, we used the computational algebra software 
{\tt Singular 4.4.0 }\cite{Singular}  to show
that the edge probabilities $X_i = \exp(-t_i)$ for the  tree edges in the 4-cycle can be computed from CFs under NMSCcom.
For example, adopting a short notation with $C_{acab}=\cf_{a_1c|a_2b}$, etc.,
\[
X_1 = \frac{2 C_{acab} C_{acbd} + C_{acab} C_{adbc} - 2 C_{acbd} C_{adab} - C_{acbd} C_{adac} - C_{adab} C_{adbc} + C_{adac} C_{adbc}}
  {-C_{acbd} C_{adab} + C_{adab} C_{adbc} + C_{acab} - C_{adab}}.
\]
A formula for $X_2$ is obtained from that for $X_1$ by exchanging the taxon labels $b$ and $d$.

\bibliographystyle{alpha}
\bibliography{Hybridization.bib}

\end{document}